\let\nlok=\lok
\providecommand\ignore[1]{}
\tikzstyle{round corners}=[rounded corners=0.5ex]
\newcommand{\Pres}{\mathtt{Pres}}
\newcommand{\pres}{\mathtt{pres}}
\providecommand{\longv}[1]{#1} 
\providecommand{\ap}[1]{} 
\providecommand{\forget}[1]{} 
\renewcommand{\red}{\color{black}}
\begin{document}
\title{Rule-based Graph Repair\thanks{This work is partly supported by the German Research Foundation (DFG), Grants HA 2936/4-2 and TA 2941/3-2 (Meta-Modeling and Graph Grammars: Generating Development Environments for Modeling Languages).}}
\author{Christian Sandmann, Annegret Habel
\institute{Universit\"at Oldenburg \email{\{habel,sandmann\}@informatik.uni-oldenburg.de}}}
\maketitle 



\begin{abstract}Model repair is an essential topic in model-driven engineering. Since models are suitably formalized as graph-like structures, we consider the problem of rule-based graph repair: Given a rule set and a graph constraint, try to construct a graph program based on the given set of rules, such that the application to any graph yields a graph satisfying the graph constraint. 
We show the existence of repair programs for {\red specific} constraints, and show the existence of rule-based repair programs for {\red specific} constraints compatible with the rule set.\end{abstract}

\section{Introduction}

In model-driven software engineering the primary artifacts are models, which have to be consistent w.r.t. a set of constraints (see e.g. \cite{Ehrig-etal15a}). These constraints can be specified by the Object Constraint Language (OCL) \cite{web:OCL24}. To increase the productivity of software development, it is necessary to automatically detect and resolve inconsistencies arising during the development process, called model repair (see, e.g. \cite{Nentwich-etal03a,Macedo-etal17a,Nebras-etal17a}). Since models can be represented as graph-like structures \cite{Biermann-Ermel-Taetzer12a} and a subset of OCL constraints can be represented as graph conditions \cite{Radke+18a,Bergmann14a}, we investigate graph repair and rule-based graph repair.
 
Firstly, the problem of \emph{graph repair} is considered: Given a graph constraint~$d$, we derive repairing sets from the constraint $d$ and try to construct a graph program using this rule set, called \emph{repair program}.
The repair program is constructed, such that the application to any graph yields a graph satisfying the graph constraint. 
Secondly, we consider the problem of \emph{rule-based graph repair}: Given a set of rules $\R$ and a constraint $d$, try to construct a repair program $P$ based on the rule set~$\R$, i.e., we allow to equip the rules of $\R$ with the dangling-edges operator, context, application conditions \cite{Habel-Pennemann09a}, and interface \cite{Pennemann09a}. 

{\bf Rule-based repair problem}
\[\tikz[node distance=6em,inner sep=5pt,rounded corners]{
\node(grammar)[]{};
\node(constraint)[strictly below of=grammar,node distance=1em]{};
\node(rulebased)[strictly right of=constraint,draw=black,minimum height=3em,yshift=1em,node distance=6em]{
\begin{tabular}{c}rule-based\\repair\end{tabular}};
\node(result)[node distance=10em,strictly right of=rulebased]{};
\draw[arrow] (constraint) to node[above]{constraint $d$} ([yshift=-1em]rulebased.west);
\draw[arrow] (grammar) to node[above]{rule set $\R$} ([yshift=1em]rulebased.west);
\draw[arrow] (rulebased) to node[above]{$\R$-based program $P$} node[below]{\begin{tabular}{c}$\forall G \dder_{P} H.H \models d$\end{tabular}} (result);
}\]

If a graph $G$ is generated by a grammar with rule set $\R$, then, after the application of an $\R$-based program, the result can be generated by the grammar, too. {\red This is interesting in contexts where the language is defined by a grammar, like triple graph grammars \cite{Schuerr94b}.}


As main results, we show that, (1) there are repair programs for all ``proper'' conditions, and (2) there are rule-based repair programs for proper conditions provided that the given rule set is compatible with the rule sets of the original program.

We illustrate our approach by a small railroad system.
\begin{example}[railroad system]\label{ex:railroad}The specification of a railroad system is given in terms of graphs, rules (for moving the trains), and conditions. The basic items are waypoints, bi-directional tracks and trains. The static part of the system is given by a directed rail net  graph: tracks are modeled by undirected edges (or a pair of directed edges, respectively) and trains are modeled by edges. Source and target nodes of a train edge encode the train's position on the track and the direction of its movement. 

The dynamic part of the system is specified by graph transformation rules. The rules model the movement and deletion of trains thereon. Application of the rule $\Move$ ($\Delete$) means to find an occurrence of the left-hand side in the rail net graph and to replace it with the right-hand side of the rule.
\[\begin{array}{lcl}
\Delete&=&\brule{
\;\embedtikz{
\node[waypoint,label={[below, yshift=-0.2cm]\tiny 1}] at (1,-2) (v1) {};
\node[waypoint,label={[below, yshift=-0.2cm]\tiny 2}] at (2,-2) (v2) {};
\draw[trackgray] (v1) edge (v2);\draw[trackwhite] (v1) edge (v2);
\draw[arrow] (v1) edge[bend angle=90, bend left,min distance=1em] node {\nlok{}} (v2);               	
}\;\;}
{\;\embedtikz{
\node[waypoint,label={[below, yshift=-0.2cm]\tiny 1}] at (1,-2) (v1) {};
\node[waypoint,label={[below, yshift=-0.2cm]\tiny 2}] at (2,-2) (v2) {};
\draw[trackgray] (v1) edge (v2);\draw[trackwhite] (v1) edge (v2);
\draw[draw=none] (v1) edge[white,bend angle=90, bend left,min distance=1em] node {\phantom{\nlok{}}} (v2);
}}
{\;\embedtikz{
\node[waypoint,label={[below, yshift=-0.2cm]\tiny 1}] at (1,-2) (v1) {};
\node[waypoint,label={[below, yshift=-0.2cm]\tiny 2}] at (2,-2) (v2) {};
\draw[trackgray] (v1) edge (v2);\draw[trackwhite] (v1) edge (v2);
\draw[draw=none] (v1) edge[white,bend angle=90, bend left,min distance=1em] node {\phantom{\nlok{}}} (v2);
}}\\\\

\Move&=&\brule{\;\embedtikz{
\node[waypoint,label={[below, yshift=-0.2cm]\tiny 1}] at (1,-2) (v1) {};
\node[waypoint,label={[below, yshift=-0.2cm]\tiny 2}] at (2,-2) (v2) {};
\node[waypoint,label={[below, yshift=-0.2cm]\tiny 3}] at (3,-2) (v3) {};
\draw[trackgray] (v1) edge (v2);\draw[trackwhite] (v1) edge (v2);
\draw[trackgray] (v2) edge (v3);\draw[trackwhite] (v2) edge (v3);
\draw[arrow] (v1) edge[bend angle=90, bend left,min distance=1em] node {\nlok{}} (v2);
\;}}
{\;\embedtikz{
\node[waypoint,label={[below, yshift=-0.2cm]\tiny 1}] at (1,-2) (v1) {};
\node[waypoint,label={[below, yshift=-0.2cm]\tiny 2}] at (2,-2) (v2) {};
\node[waypoint,label={[below, yshift=-0.2cm]\tiny 3}] at (3,-2) (v3) {};
\draw[trackgray] (v1) edge (v2);\draw[trackwhite] (v1) edge (v2);
\draw[trackgray] (v2) edge (v3);\draw[trackwhite] (v2) edge (v3);
\draw[draw=none] (v2) edge[white,bend angle=90, bend left,min distance=1em] node {\phantom{\nlok{}}} (v3);
}}
{\;\embedtikz{
\node[waypoint,label={[below, yshift=-0.2cm]\tiny 1}] at (1,-2) (v1) {};
\node[waypoint,label={[below, yshift=-0.2cm]\tiny 2}] at (2,-2) (v2) {};
\node[waypoint,label={[below, yshift=-0.2cm]\tiny 3}] at (3,-2) (v3) {};
\draw[trackgray] (v1) edge (v2);\draw[trackwhite] (v1) edge (v2);
\draw[trackgray] (v2) edge (v3);\draw[trackwhite] (v2) edge (v3);
\draw[arrow] (v2) edge[bend angle=90, bend left,min distance=1em] node {\nlok{}} (v3);
}\;\;}\\\\
\end{array}\]

In the following, we consider the constraint $\NoTwo$ below, meaning that there are no two trains occupying the same piece of track.

\[\NoTwo=\NE\embedtikz{
\node[waypoint,label={[below left, yshift=-0.2cm]\tiny}] at (1,-2) (v1) {};
\node[waypoint,label={[below right, yshift=-0.2cm]\tiny}] at (2,-2) (v2) {};
\draw[trackgray] (v1) edge (v2);\draw[trackwhite] (v1) edge (v2);
\draw[arrow] (v1) edge[bend angle=90, bend left,min distance=1em] node {\nlok{}} (v2);
\draw[arrow] (v1) edge[bend angle=90, bend right,min distance=1em] node {\nlok{}} (v2);
}\]

One may look for repair programs for the constraint NoTwo based on the rule sets $\{\Move\}$, $\{\Delete\}$, and $\{\Move,\Delete\}$ such that the application to any graph yields a graph satisfying the constraint $\NoTwo$. 
\ignore{It turns out that there is no $\Moved$-based repair program for $\NoTwo$: e.g, for the graph \
\[\embedtikz{
\node[waypoint,label={[below left, yshift=-0.2cm]\tiny}] at (1,-2) (v1) {};
\node[waypoint,label={[below right, yshift=-0.2cm]\tiny}] at (2,-2) (v2) {};
\draw[trackgray] (v1) edge (v2);\draw[trackwhite] (v1) edge (v2);
\draw[arrow] (v1) edge[bend angle=90, bend left,min distance=1em] node {\nlok{}} (v2);
\draw[arrow] (v1) edge[bend angle=90, bend right,min distance=1em] node {\nlok{}} (v2);
}\]}
\end{example}

The structure of the paper is as follows. 
In Section~\ref{sec:preliminaries}, we review the definitions of graphs, graph conditions, and graph programs.
In Section~\ref{sec:repair}, we introduce repair programs and show that there are repair programs for so-called proper conditions. 
In Section~\ref{sec:rb-repair}, we introduce rule-based programs, show that there are rule-based programs for transformations, and\ignore{show that there are} rule-based repair programs for proper conditions compatible with a rule set. 
In Section~\ref{sec:related}, we present some related concepts.
In Section~\ref{sec:conclusion}, we give a conclusion and mention some further work.

\section{Preliminaries}\label{sec:preliminaries}

In the following, we recall the definitions of directed, labelled graphs, graph conditions, rules and   transformations \cite{Ehrig-Ehrig-Prange-Taentzer06b}, graph programs \cite{Habel-Plump01a}, and basic transformations \cite{Habel-Pennemann09a}. 


A directed, labelled graph consists of a set of nodes and a set of edges where each edge is equipped with a source and a target node and where each node and edge is equipped with a label.
\begin{definition}[graphs \& morphisms]  A \emph{(directed, labelled) graph} (over a label alphabet $\mathcal{L}$) is a system $G=(\V_G,\E_G,\sou_G,\tar_G,\lab_{\V,G},\lab_{\E,G})$ where $\V_G$ and $\E_G$ are finite sets of \emph{nodes} (or \emph{vertices}) and \emph{edges}, $\sou_G,\tar_G\colon$ $\E_G\to \V_G$ are total functions assigning \emph{source} and \emph{target}\/ to each edge, and $\lab_{\V,G}\colon\V_G\to\mathcal{L}$, $\lab_{\E,G}\colon\E_G\to\mathcal{L}$ are total labeling functions. If $\V_G=\emptyset$, then $G$ is the \emph{empty graph}, denoted by~$\emptyset$. A~graph is \emph{unlabelled} if the label alphabet is a singleton.
Given graphs $G$ and $H$, a \emph{(graph) morphism} $g\colon G \to H$ consists of total functions $g_\V\colon\V_G\to\V_H$ and $g_\E\colon\E_G\to\E_H$ that preserve sources, targets, and labels, that is, $g_\V\circ\sou_G=\sou_H\circ g_\E$, $g_\V\circ\tar_G=\tar_H\circ g_\E$, $\lab_{\V,G}=\lab_{\V,H}\circ g_\V$, $\lab_{\E,G}=\lab_{\E,H}\circ g_\E$. The morphism $g$ is \emph{injective}\/ (\emph{surjective}\/) if $g_{\V}$ and $g_{\E}$ are injective (surjective), and an \emph{isomorphism}\/ if it is injective and surjective. In the latter case, $G$ and $H$ are \emph{isomorphic}, which is denoted by $G\cong H$. An injective morphism $g\colon G\injto H$ is an \emph{inclusion morphism} if $g_\V(v)=v$ and $g_\E(e)=e$ for all $v\in\V_G$  and all $e\in\E_G$.
\end{definition}

\longv{\begin{convention}Drawing a graph, nodes are drawn as circles with their labels (if existent) inside, and edges are drawn as arrows with their labels (if existent) placed next to them. Arbitrary graph morphisms are drawn by usual arrows $\to$, injective graph morphisms are  distinguished by $\injto$. \end{convention}}



Graph conditions are nested constructs, which can be represented as trees of morphisms equipped with quantifiers and Boolean connectives. Graph conditions and first-order graph formulas are expressively equivalent \cite{Habel-Pennemann09a}. 

\begin{definition}[graph conditions]\label{def:cond}A \emph{(graph) condition} over a graph $A$ is of the form (a) $\ctrue$ or $\exists(a,c)$ where $a\colon A \injto C$ is a proper inclusion morphism\footnote{Without loss of generality, we may assume that\ignore{the conditions are \emph{proper}, i.e.,} for all inclusion morphisms $a\colon A\injto C$  in the condition, $A$ is a proper subgraph of~$C$.} and $c$ is a condition over $C$. (b) For a condition $c$ over~$A$, $\neg c$ is a condition over~$A$.
(c) For conditions $c_i$ ($i \in I$ for some finite index set $I$\longv{\footnote{In this paper, we consider graph conditions with finite index sets.}}) over $A$, $\wedge_{i \in I} c_i$ is a condition over~$A$. 
Conditions over the empty graph~$\emptyset$ are called \emph{constraints}. In the context of rules, conditions are called \emph{application conditions}. Conditions built by (a) and (b) are called \emph{linear}. 

Any injective morphism $p\colon A\injto G$ \emph{satisfies} $\ctrue$. An injective morphism $p$ \emph{satisfies} $\PE(a,c)$ with $a\colon~A\injto C$ if there exists an injective morphism $q\colon C\injto G$ such that $q\circ a=p$ and $q$~satisfies~$c$. 
\[\tikz[node distance=1.5em,shape=rectangle,outer sep=1pt,inner sep=2pt]{
\node(P){$A$};
\node(G)[strictly below right of=P]{$G$};
\node(C)[strictly above right of=G]{$C,$};
\draw[monomorphism] (P) -- node[overlay,above](a){$a$} (C);
\draw[monomorphism] (P) -- node[overlay,below left]{$p$} (G);
\draw[altmonomorphism] (C) -- node[overlay,below right](q){$q$} (G);
\draw[draw=white] (a) -- node[overlay](tr1){=} (G);
\node(c)[outer sep=0pt,inner sep=0pt,node distance=0em,strictly right of=C]{\tikz[draw=black,fill=lightgray]{
\filldraw (0,0) -- (0.6,0.12) -- node[right,outer sep=1ex]{\footnotesize{$c$}} (0.6,0) -- (0.6,-0.12) -- (0,0);}};
\draw[draw=white] (q) -- node[overlay,sloped](tr1){$\models$} (c);
\node(Y)[node distance=0.2em,strictly right of=c]{$)$};
\node(X)[node distance=0.0em,strictly left of=P]{$\PE($};}\]

An injective morphism $p$ \emph{satisfies} $\neg c$ if $p$ does not satisfy $c$, and $p$ \emph{satisfies} $\wedge _{i \in I}c_i$ if $p$ satisfies each $c_i$ ($i \in I$). We write  $p\models c$ if $p$ satisfies the condition $c$ (over $A$). 
A condition $c$ over $A$ is \emph{satisfiable} if there is a morphism $p \colon A \injto G$ that satisfies $c$.
A graph $G$ \emph{satisfies} a constraint $c$, $G\models c$, if the morphism $p\colon\emptyset\injto G$ satisfies~$c$.
A constraint $c$ is \emph{satisfiable} if there is a graph $G$ that satisfies $c$. 

Two conditions $c$ and $c'$ over $A$ are \emph{equivalent}, denoted by $c\equiv c'$, if for all graphs $G$ and all injective morphisms $p\colon A\injto G$, $p\models c$ iff $p\models c'$. A condition $c$ \emph{implies} a condition $c'$, denoted by $c\impl c'$, if for all graphs and all injective morphisms $p\colon A\injto G$, $p\models c$ implies $p\models c'$.
\end{definition}

\begin{notation}Graph conditions may be written in a more compact form: $\PE a:=\PE(a,\ctrue)$, $\cfalse:=\neg \ctrue$ and $\PA(a,c):=\NE(a, \neg c)$, and $\NE:=\neg\PE$\ignore{, and $\NA:=\neg\PA$}. The expressions $\vee_{i \in I} c_i$ and $c\impl c'$ are defined as usual. For an inclusion morphism $a\colon A\DSinjto C$ in a condition, we just depict the codomain $C$, if the domain $A$ can be unambiguously inferred.\end{notation}

\begin{example}The expression $\neg\PE(\;\emptyset\;\injto\onenode{1}\;,\;\neg\PE(\onenode{1}\injto\twonodesedge{1}{}{}, \ctrue)\vee\neg\PE(\onenode{1}\injto\twonodesredge{1}{}{}, \ctrue))$ is a constraint according to Definition \ref{def:cond}, written in compact form as $\PA(\onenode{1}\;,\;\PE(\twonodesedge{1}{}{})\wedge\PE(\twonodesredge{1}{}{}))$ meaning that, for every node, there exists a real \footnote{An edge is said to be \emph{real}, if it is not a loop.} outgoing and a real incoming edge.
\end{example}

\begin{fact}[equivalences \cite{Pennemann09a}]\label{fac:equiv}There are the following equivalences:
\[\begin{array}{lcl@{\qquad}lcl}
\PE(x,\ctrue)&\equiv&\PE x &
\PA(x,\ctrue)&\equiv&\ctrue\\
\PE(x,\cfalse)&\equiv&\cfalse &
\PA(x,\cfalse)&\equiv&\NE x\\
\PA(x,\PE(y,\cfalse))&\equiv&\PA(x,\cfalse)\equiv\NE x &
\PE(x,\PA(y,\cfalse))&\equiv&\PE(x,\NE y)
\end{array}\]
\end{fact}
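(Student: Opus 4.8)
The plan is to verify each of the six equivalences directly from the satisfaction relation of Definition~\ref{def:cond}, after unfolding the derived notation of the Notation block. Two auxiliary observations will be used throughout: (i) every injective morphism satisfies $\ctrue$, and consequently none satisfies $\cfalse=\neg\ctrue$; and (ii) $\equiv$ is a \emph{congruence} with respect to subcondition replacement, i.e.\ if $c\equiv c'$ are conditions over a graph $C$ and $a\colon A\injto C$, then $\PE(a,c)\equiv\PE(a,c')$ and $\neg c\equiv\neg c'$. Property~(ii) follows by a routine induction on the structure of conditions from the recursive clauses defining $\models$, and it implies in particular that $c\equiv c'$ yields $\PA(a,c)\equiv\PA(a,c')$, since $\PA(a,c)=\neg\PE(a,\neg c)$.

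The first four equivalences are handled individually. The equivalence $\PE(x,\ctrue)\equiv\PE x$ holds by definition, as $\PE x$ is mere shorthand for $\PE(x,\ctrue)$. For $\PE(x,\cfalse)\equiv\cfalse$: a morphism $p\colon A\injto G$ satisfies $\PE(x,\cfalse)$ iff there is some injective $q$ with $q\circ x=p$ and $q\models\cfalse$; by~(i) this never holds, so $\PE(x,\cfalse)$ is unsatisfiable, hence equivalent to $\cfalse$. For $\PA(x,\ctrue)\equiv\ctrue$: unfolding, $\PA(x,\ctrue)=\neg\PE(x,\neg\ctrue)=\neg\PE(x,\cfalse)\equiv\neg\cfalse\equiv\ctrue$ using the point just proved (alternatively, $p\models\PA(x,\ctrue)$ holds vacuously for every $p$). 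For $\PA(x,\cfalse)\equiv\NE x$: unfolding the abbreviations, $\PA(x,\cfalse)=\neg\PE(x,\neg\cfalse)=\neg\PE(x,\ctrue)=\neg\PE x=\NE x$.

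The two composite equivalences follow by combining the above with congruence~(ii). For $\PA(x,\PE(y,\cfalse))\equiv\PA(x,\cfalse)\equiv\NE x$: from $\PE(y,\cfalse)\equiv\cfalse$ and congruence applied through the context $\PA(x,-)$ we get $\PA(x,\PE(y,\cfalse))\equiv\PA(x,\cfalse)$, and the latter equals $\NE x$ by the previous paragraph. For $\PE(x,\PA(y,\cfalse))\equiv\PE(x,\NE y)$: from $\PA(y,\cfalse)\equiv\NE y$ and congruence applied through the context $\PE(x,-)$. No step presents a genuine obstacle; the only point needing a modicum of care is the congruence property of $\equiv$, which—although intuitively evident—ought to be stated explicitly (or cited), since $\models$ is defined by recursion on the structure of conditions.
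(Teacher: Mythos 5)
Your proposal is correct: the paper itself gives no proof of Fact~\ref{fac:equiv} (it is quoted from \cite{Pennemann09a}), and your argument---unfolding the abbreviations $\PE x$, $\cfalse$, $\PA(a,c)=\neg\PE(a,\neg c)$, noting that no morphism satisfies $\cfalse$, and then propagating the basic equivalences through the contexts $\PE(x,-)$ and $\PA(x,-)$ via the congruence property of $\equiv$---is exactly the standard verification one would give from Definition~\ref{def:cond}. The only cosmetic caveat is that steps such as $\neg\cfalse=\ctrue$ in your chain for $\PA(x,\cfalse)\equiv\NE x$ are equivalences (via $\neg\neg\ctrue\equiv\ctrue$ and congruence) rather than syntactic equalities, which your own remark about stating the congruence explicitly already covers.
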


To simplify our reasoning, the repair program operates on a subset of conditions in normal form, so-called conditions with alternating quantifiers.
\begin{definition}[alternating quantifiers, proper and basic conditions]\label{def:ANF}\label{def:proper}
A linear condition of the form $\Q(a_1,\bar{\Q}(a_2,\Q(a_3,\ldots)))$ with $\Q\in\{\PA,\PE\}$, $\bar\PA=\PE$, $\bar\PE=\PA$ ending with $\ctrue$ or $\cfalse$ is a \emph{condition with alternating quantifiers (ANF)}. Such a condition in ANF is \emph{proper} if it ends with a condition $\PE(b,\ctrue)\equiv \PE b$ or it is a condition of  the form $\PE(a, \PA(b, \cfalse))\equiv\PE(a,\NE b)$ or $\PA(b, \cfalse)\equiv\NE b$. A~proper condition is \emph{basic} if it is of the form $\PE b$ or $\NE b$.
\[\tikz[node distance=1.2em,label distance=1pt]{%
 \node(0) at (0,0) {\footnotesize $Q(a_1, \bar{Q}(a_2, Q(a_3, \ldots, \ctrue)$};
 \node(1) at (5,0) {\footnotesize $Q(a_1, \bar{Q}(a_2, Q(a_3, \ldots, \cfalse)$};
 \node(2) at (0,-0.7) {\footnotesize + $\PA(b, \cfalse)\equiv\NE b$};   
 \node(2') at (0,-1) {\footnotesize + $\PE(a, \PA(b, \cfalse)\equiv \PE (a, \NE b)$};   
 \node(3) at (5,-0.7) {\footnotesize -- $\NE b$};   
 \node(3') at (5,-1) {\footnotesize -- $\PE (a, \NE b)$};   
 \node(l1) at (2.5,0.5){};
  \node(l2) at (2.5,-1.5){};
  \draw[-] (l1) to (l2); 
  \node(uld) at (-2.5,-1.5){};
  \node(org) at (7.5,0.5) {};
  \draw[rounded corners] (uld) node [above]{} rectangle (org); 
  \node (proper) at (-1,-2) {proper};
  \node (r) at (6,-2) {not proper};
  \draw[->] (proper) to (2');
  \draw[->] (r) to (3');
}\]
\end{definition}

\begin{example}The linear conditions $\PA(\onenode{1}\;,\;\PE(\twonodesedge{1}{}{},\ctrue))$ and 
$\PA(\onenode{1}\;,\;\PE(\twonodesedge{1}{}{},\cfalse))$ are conditions with alternating quantifiers. 
$\PA(\onenode{1}, \PE(\twonodesedge{1}{}{},\ctrue))$ and \scalebox{0.75}{$\PE(\onenode{1}, \PA(\twonodesedge{1}{}{}, \PE(\twonodesedgeloop{1}{2}{}{},\ctrue)))$} are  proper. Moreover, $\PA(\threenodesedgeredgeedge{1}{2}{}{}{},\cfalse)\equiv\NE(\threenodesedgeredgeedge{1}{2}{}{}{})$ is proper. The linear condition $\PA (\onenode{1},$ 
$\PE(\twonodesedgeredge{1}{}{}{},\PA(\threenodesedgeredgeedge{1}{}{}{}{},\cfalse)))\equiv\PA (\onenode{1}, \PE(\twonodesedgeredge{1}{}{}{}, \NE(\threenodesedgeredgeedge{1}{}{}{}{})))$ is non-proper.
\end{example}

By a normal form result for conditions \cite{Pennemann09a}, we obtain a normal form result for linear conditions saying that every linear condition effectively can be transformed into an equivalent condition with alternating quantifiers. 

\ignore{\red
\begin{definition}[(conjunctive) normal form \cite{Pennemann09a}]\label{def:conjnormalform}
The empty conjunction $\wedge_{j \in \emptyset} d_j \equiv \ctrue$ is in conjunctive normal form (CNF). Every condition $\wedge_{j \in J} \vee_{k \in K_j}~d_k$ is in CNF, if for every $j \in J$ and every $k \in K_j$, $d_k = \PE(a_k, c_k)$ or $d_k = \NE(a_k, c_k)$, where $a_k$ is not an isomorphism and $c_k$ is a condition in CNF.
\end{definition}
}

\begin{fact}[normal form]\label{fav:alternate} For every linear condition, there exists an equivalent condition with alternating quantifiers.\end{fact}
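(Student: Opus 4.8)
The plan is to prove the statement by structural induction on the linear condition, pushing negations inward and then repeatedly collapsing two adjacent quantifiers of the same kind into one by means of a shifting/splitting construction, so that in the end the quantifiers strictly alternate. First I would observe that a linear condition, by Definition~\ref{def:cond}, is built only from $\ctrue$, $\PE(a,c)$ and $\neg c$, so up to the compact notation it is a finite alternating-or-not chain $Q_1(a_1,Q_2(a_2,\ldots,Q_n(a_n,t)))$ with each $Q_i\in\{\PE,\PA\}$ and $t\in\{\ctrue,\cfalse\}$. The goal is to rewrite such a chain into one in which the $Q_i$ genuinely alternate; the base cases $\ctrue$, $\cfalse$, $\PE a$, $\NE a$ are already in ANF.

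The key step is the treatment of a ``same-quantifier'' junction, say $\PE(a_1\colon A\injto A_1,\PE(a_2\colon A_1\injto A_2, c))$ (the case of two nested $\PA$'s is dual, obtained by negation). Here I would use the fact that nesting two existentials is equivalent to a single existential over the composite inclusion together with a suitable condition: $\PE(a_1,\PE(a_2,c))\equiv\PE(a_2\circ a_1, c)$, since an injective $q\colon A_2\injto G$ with $q\circ(a_2\circ a_1)=p$ and $q\models c$ is exactly the same data as the two-step witness. Thus every maximal run of $k$ consecutive equal quantifiers can be contracted to a single one over the composed morphism, strictly decreasing the length of the chain; combined with the dual identity for $\PA$, one contraction step per maximal run turns the whole chain into an alternating one. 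Finally, for the trailing part: if after contraction the chain still ends in $\ctrue$ or $\cfalse$ preceded by a quantifier, it is already of the required ANF shape, and the equivalences of Fact~\ref{fac:equiv} (e.g.\ $\PE(x,\cfalse)\equiv\cfalse$, $\PA(x,\ctrue)\equiv\ctrue$, $\PA(x,\cfalse)\equiv\NE x$) can be used to absorb degenerate tails and keep the result genuinely alternating.

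I would present this as an induction on the number of ``violations of alternation'' in the chain (the count of indices $i$ with $Q_i=Q_{i+1}$): the contraction identity removes at least one such violation while preserving equivalence, and when the count reaches zero the condition is in ANF. The main obstacle, and the step deserving the most care, is justifying the composition identity $\PE(a_1,\PE(a_2,c))\equiv\PE(a_2\circ a_1,c)$ and its dual cleanly under the satisfaction relation of Definition~\ref{def:cond} — in particular checking that the codomain of the contracted morphism is still a proper supergraph of its domain (so the footnote convention in Definition~\ref{def:cond} is respected), and handling the corner case where a composite becomes an isomorphism, which is where Fact~\ref{fac:equiv} is invoked to rewrite $\PE(\mathrm{id},c)\equiv c$ and similar. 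Everything else — the negation-pushing for the $\PA$ case and the bookkeeping of the induction measure — is routine.
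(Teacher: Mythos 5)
Your proposal is correct, but it takes a genuinely different route from the paper. The paper's own proof is a two-liner that delegates the real work to Pennemann's conjunctive normal form result and then applies the rewrite $\NE(a,\neg c)\equiv\PA(a,c)$ (essentially unfolding the abbreviation for $\PA$) as long as possible; you instead stay entirely within linear conditions and argue directly: push negations inward via $\neg\PE(a,c)\equiv\PA(a,\neg c)$ to obtain a quantifier chain ending in $\ctrue$ or $\cfalse$, then remove each violation of alternation by the contraction $\PE(a_1,\PE(a_2,c))\equiv\PE(a_2\circ a_1,c)$ and its dual for $\PA$, with the number of adjacent equal quantifiers as a strictly decreasing measure. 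The contraction identities do hold under the satisfaction relation of Definition~\ref{def:cond}: a witness for the composite inclusion is exactly the same data as the two-step witness, since injective morphisms compose, and the dual case follows by negation. So your argument goes through; what it buys is a self-contained, effective procedure tailored to linear conditions that avoids the CNF machinery altogether, at the price of re-proving a small amount of folklore that the paper simply cites (the paper's route, in turn, also covers conditions with conjunctions, which your chain representation does not). One remark: the corner case you flag, a composite $a_2\circ a_1$ becoming an isomorphism, cannot occur, because by the convention in Definition~\ref{def:cond} every morphism in a condition is a proper inclusion and the composite of proper inclusions is again a proper inclusion; so no appeal to $\PE(\mathrm{id},c)\equiv c$ is needed.
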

\begin{proof}By a conjunctive normal form result in \cite{Pennemann09a}, every condition can be effectively transformed equivalent condition in normal form. The application of the rule $\NE(a, \neg c)\equiv \PA(a, c)$ as long as possible yields an equivalent condition with alternating quantifiers.\end{proof}

By definition, proper conditions are satisfiable.
\begin{fact}[proper conditions are satisfiable]\label{fac:proper}
Every proper condition is satisfiable.
\end{fact}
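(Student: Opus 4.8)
The plan is to prove Fact~\ref{fac:proper} by structural induction on the shape of a proper condition, building in each case an explicit graph (equivalently, a witnessing morphism from $\emptyset$) that satisfies it. Recall that a proper condition in ANF is, by Definition~\ref{def:proper}, one of: (i) a chain $\Q(a_1,\bar\Q(a_2,\dots))$ ending in $\ctrue$; (ii) a condition ending with $\PE(b,\ctrue)\equiv\PE b$ at the innermost position; (iii) $\PE(a,\PA(b,\cfalse))\equiv\PE(a,\NE b)$; or (iv) $\PA(b,\cfalse)\equiv\NE b$. So the proof amounts to handling these four syntactic patterns, and the key observation driving every case is the same: for an inclusion $a\colon A\injto C$ one can always realise $C$ itself as a host graph, so $\PE(a,c)$ over $A$ is satisfied by $p\colon\emptyset\injto C$ whenever the residual condition $c$ is satisfied by the inclusion $A\injto C$.

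Concretely, I would argue as follows. First treat the basic conditions (the base cases). For $\PE b$ with $b\colon\emptyset\injto B$, the morphism $p\colon\emptyset\injto B$ satisfies $\PE b$ with witness $q=\mathrm{id}_B$, so $B\models\PE b$. For $\NE b\equiv\PA(b,\cfalse)$ with $b\colon\emptyset\injto B$: since in a proper/ANF condition $B$ is a \emph{proper} supergraph of $\emptyset$, $B$ is nonempty; take the host graph $G=\emptyset$, and observe there is no injective morphism $q\colon B\injto\emptyset$, so $p\colon\emptyset\injto\emptyset$ vacuously satisfies $\NE b$, i.e.\ $\emptyset\models\NE b$. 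Next the inductive step for pattern (ii)/(i): a proper condition of the form $\PA(a,c)$ over $A$ where the outermost quantifier is $\forall$ — here I would note that $\PA(a,c)\equiv\NE(a,\neg c)$ and that, more simply, we can always satisfy a $\forall$-quantified condition at its domain by choosing the host graph to have \emph{no} $a$-image at all. But since the condition is a constraint (over $\emptyset$) and $a\colon A\injto C$ is proper, the cleanest uniform choice is: if the outermost morphism is universally quantified, pick a host graph realising $A$ but admitting no extension along $a$; if it is existentially quantified, pick a host graph realising $C$ and recurse into $c$ along $A\injto C$. Iterating this down the alternating chain terminates because the chain is finite, and at the bottom we land in one of the basic cases just handled.

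For pattern (iii), $\PE(a,\NE b)$ with $a\colon\emptyset\injto A$ and $b\colon A\injto B$ a proper inclusion: take $G=A$. Then $p\colon\emptyset\injto A$ has the witness $q=\mathrm{id}_A\colon A\injto A$ with $q\circ a=p$, and we must check $q\models\NE b$, i.e.\ there is no injective $r\colon B\injto A$ with $r\circ b=q=\mathrm{id}_A$; such an $r$ would force $B$ to embed into $A$ in a way that restricts to the identity on the (proper) subgraph $A$, which is impossible since $B$ strictly contains $A$ (a node or edge of $B\setminus A$ has nowhere to go under a target-identity-on-$A$ injection into $A$). Hence $A\models\PE(a,\NE b)$. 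The genuinely delicate point — the one I expect to be the main obstacle — is the properness hypothesis doing real work: it is exactly the requirement that every inclusion $a\colon A\injto C$ in the condition be a \emph{proper} subgraph inclusion that prevents pathological collapses (e.g.\ $\NE(\mathrm{id}_\emptyset)\equiv\cfalse$, which is unsatisfiable) and that guarantees the "no extension/no retraction" arguments above go through. So the writeup must be careful to invoke the footnote convention that all inclusions in a condition are proper, and to check that this is preserved when reading off the sub-conditions in the inductive step.

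\begin{proof}
By induction on the structure of a proper condition $c$, we exhibit a graph satisfying it.

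\emph{Base cases (basic conditions).} If $c\equiv\PE b$ with $b\colon\emptyset\injto B$, then $B\models\PE b$: the morphism $q=\id_B\colon B\injto B$ is a witness. If $c\equiv\NE b\equiv\PA(b,\cfalse)$ with $b\colon\emptyset\injto B$, then by properness $B$ is a proper supergraph of $\emptyset$, hence $B\neq\emptyset$; therefore there is no morphism $B\injto\emptyset$, and $\emptyset\models\NE b$ vacuously.

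\emph{Case $c\equiv\PE(a,\NE b)$} with $a\colon\emptyset\injto A$ and $b\colon A\injto B$ proper inclusions. We claim $A\models c$. Indeed $q=\id_A\colon A\injto A$ satisfies $q\circ a=(\emptyset\injto A)$, and $q\models\NE b$: an injective $r\colon B\injto A$ with $r\circ b=\id_A$ would embed $B$ into $A$ while restricting to the identity on $A$, impossible since $B$ properly contains $A$. Hence $A\models\PE(a,\NE b)$.

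\emph{Inductive step (alternating chains).} Let $c$ be a proper condition over $\emptyset$ of the form $\Q(a_1,c_1)$ with $a_1\colon\emptyset\injto C_1$ a proper inclusion and $c_1$ a proper condition over $C_1$. If $\Q=\PE$, then by the induction hypothesis applied to $c_1$ there is a graph $D$ with an inclusion $C_1\injto D$ such that the inclusion $C_1\injto D$ satisfies $c_1$; taking $G=D$, the morphism $q\colon C_1\injto D$ witnesses $G\models\PE(a_1,c_1)$. If $\Q=\PA$, write $c\equiv\NE(a_1,\neg c_1)$; then $\emptyset\models c$ holds already for $G=\emptyset$, since there is no morphism $C_1\injto\emptyset$ ($C_1\neq\emptyset$ by properness), so the $\forall$-condition is satisfied vacuously. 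In either case $c$ is satisfiable, completing the induction.
\end{proof}
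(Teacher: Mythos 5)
Your proof is correct in substance and follows the same route as the paper, which disposes of the fact by a (much terser) inspection of the shapes that properness allows: chains ending in $\PE(b,\ctrue)$, and the forms $\NE b$ and $\PE(a,\NE b)$ with $b$ not an isomorphism, each of which admits an obvious witness. Your extra detail (identity witnesses, and the observation that an injective morphism $B\injto A$ restricting to the identity on a proper subgraph $A$ cannot exist) is exactly the content the paper leaves implicit. One bookkeeping slip you should repair: the fact, and the notion of satisfiability, concern proper conditions over an arbitrary graph $A$, yet you set up the induction only for conditions over $\emptyset$, while your own inductive step applies the hypothesis to the subcondition $c_1$, which lives over $C_1\neq\emptyset$ — so the induction as stated does not close. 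The fix is routine: state the claim for conditions over arbitrary $A$; in the base cases use $p=b$ with witness $\id_B$ for $\PE b$ and $p=\id_A$ for $\NE b$; and in the universal case replace the witness $G=\emptyset$ (which is unavailable when $A\neq\emptyset$, since there is then no morphism $A\injto\emptyset$) by $G=A$, $p=\id_A$, where vacuity follows from the same "no injective extension of the identity into a strictly smaller graph" argument you already give for $\PE(a,\NE b)$.
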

\begin{proof}By Definition \ref{def:proper}, a proper condition is $\ctrue$, ends with a condition of the form $\PE(x,\ctrue)\equiv\PE x$, $\PA(x,\ctrue)\equiv\ctrue$, or is of the form $\NE b$ or $\PE(a,\NE b)$ and $b$ is not an isomorphism. Thus, it is satisfiable.\end{proof}

\begin{fact}[non-proper satisfiable conditions]\label{fac:non-proper}
There are non-proper conditions that are satisfiable.
\end{fact}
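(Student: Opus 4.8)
The plan is simply to name one non-proper condition and check that it has a model. A convenient family of candidates is the ANF conditions that end in $\cfalse$ but fail to have either of the two shapes $\NE b$ or $\PE(a,\NE b)$ singled out in Definition~\ref{def:proper}. The shortest representative is $d := \PA(a_1,\PE(a_2,\cfalse))$ for proper inclusions $a_1\colon \emptyset \injto C_1$ and $a_2\colon C_1 \injto C_2$; concretely, take $a_1\colon \emptyset \injto \onenode{1}$ and $a_2\colon \onenode{1}\injto \twonodesedge{1}{}{}$.

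First I would confirm that $d$ is a genuine constraint in the sense of Definition~\ref{def:cond} --- it is assembled only from constructors (a) and (b), hence linear --- and that it is in ANF, since its quantifier prefix $\PA,\PE$ alternates and terminates in $\cfalse$. Next I would verify that $d$ is \emph{not} proper: it ends with $\PE(a_2,\cfalse)$ rather than $\PE(b,\ctrue)$; it is not of the form $\PE(a,\PA(b,\cfalse))$ because its outermost quantifier is $\PA$; and it is not $\PA(b,\cfalse)\equiv\NE b$ because a nontrivial subcondition sits below the first quantifier. Hence $d$ is non-proper by Definition~\ref{def:proper}.

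For satisfiability I would invoke the equivalence $\PA(x,\PE(y,\cfalse))\equiv\PA(x,\cfalse)\equiv\NE x$ from Fact~\ref{fac:equiv} with $x=a_1$ and $y=a_2$, so that $d \equiv \NE(a_1) = \NE(\onenode{1})$, a constraint satisfied exactly by graphs without nodes. Since the empty graph $\emptyset$ has no nodes, $\emptyset \models d$, and $d$ is satisfiable.

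The statement is an existence claim, so there is no real obstacle: one witness does the job. The only point to be attentive about is that ``non-proper'' is meant in the purely syntactic sense of Definition~\ref{def:proper}; if a witness that is not even equivalent to a proper condition is desired, one can instead use a condition equivalent to an ANF condition of the form $\PA(a_1,\PE(a_2,\NE a_3))$ --- for instance ``every node carries exactly one loop'', modelled by a single node with one loop --- at the price of a marginally longer satisfiability check.
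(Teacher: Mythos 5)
Your proof is correct and takes essentially the same route as the paper: pick a concrete non-proper condition and establish its satisfiability by rewriting it, via the equivalences of Fact~\ref{fac:equiv}, into an equivalent proper (hence obviously satisfiable) form. The only cosmetic difference is that your witness $\PA(a_1,\PE(a_2,\cfalse))$ is smaller than the paper's and you exhibit the empty graph as a model directly instead of citing Fact~\ref{fac:proper}.
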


\begin{proof}The non-proper condition $\PA(\onenode{1}\onenode{2},\PE(\twonodesedge{1}{2}{},\PA(\twonodesredges{1}{2}{}{},\cfalse)))$ can be transformed into a proper one: $\PA(\onenode{1}\onenode{2},\PE(\twonodesedge{1}{2}{},\NE(\twonodesredges{1}{2}{}{})))\equiv\PA(\onenode{1}\onenode{2},\cfalse)\equiv\NE\twonodes{1}{2}$. By Fact~\ref{fac:proper}, the condition is satisfiable.
\end{proof}


Plain rules are specified by a pair of injective graph morphisms. They may be equipped with context, application conditions, and interfaces. For restricting the applicability of rules, the rules are equipped with a left application condition. {\red By extending the rules with a context, it is possible to require an application condition over an extended left-hand side (see Example \ref{ex:rb-repair})}. By the interfaces, it becomes possible to hand over information between the transformation steps. 

\begin{definition}[rules and  transformations]\label{def:rule} 
A~\emph{plain rule} $p=\brule{L}{K}{R}$ consists of two inclusion morphisms $K\injto L$ and $K\injto R$.  
The rule $p$ \emph{equipped with context} $K\injto K'$ is the rule $\brule{L'}{K'}{R'}$ where $L'$ and $R'$ are the pushout objects in the diagrams (1) and (2) below.
\[\begin{tikzpicture}[node distance=2em,shape=rectangle,outer sep=1pt,inner sep=2pt,label distance=-1.25em]
\node(L){$L$};
\node(K)[strictly right of=L]{$K$};
\node(R)[strictly right of=K]{$R$};
\node(D)[strictly below of=K]{$K'$};
\node(G)[strictly below of=L]{$L'$};
\node(H)[strictly below of=R]{$R'$};
\draw[altmonomorphism] (K) -- node[overlay,above]{\small } (L);
\draw[monomorphism] (K) -- node[overlay,above]{\small } (R);
\draw[altmonomorphism,dashed] (D) -- (G);
\draw[monomorphism,dashed] (D) -- (H);
\draw[monomorphism,dashed] (L) -- node[overlay,left](g){\small } (G);
\draw[monomorphism] (K) -- node[overlay,left]{\small } (D);
\draw[monomorphism,dashed] (R) -- node[overlay,right]{\small }(H);
\draw[draw=none] (L) -- node[overlay]{\small (1)} (D);
\draw[draw=none] (R) -- node[overlay]{\small (2)} (D);
 \end{tikzpicture}\]
A \emph{rule} $\prule=\tuple{x,p,\ac,y}$ \emph{with interfaces} $X$ and $Y$ consists of a plain rule $p=\brule{L}{K}{R}$ with left application condition $\ac$ and two injective morphisms $x\colon X\injto L$, $y\colon Y\injto R$, called the \emph{(left and right) interface morphisms}. If both interfaces are empty, i.e., the domains of the interface morphisms are empty, we write $\prule=\tuple{p,\ac}$. If additionally $\ac=\ctrue$, we write $\prule=\tuple{p}$ or short $p$. 
A \emph{direct transformation}\/ $G\dder_{\prule,g,h,i} H$ or short $G\dder_\prule H$ from $G$ to $H$ applying $\prule$ at $g\colon X\injto G$ consists of the following steps: 
\begin{enumerate}
\item[(1)] Select a match $g'\colon L\injto G$ such that $g=g'\circ x$ and $g'\models \ac$.
\item[(2)] Apply the plain rule\footnote{The application of a plain rule is as in the double-pushout approach \cite{Ehrig-Ehrig-Prange-Taentzer06b}.} $p$ at $g'$ (possibly) yielding a comatch $h'\colon R\injto H$.
\item[(3)]Unselect $h\colon Y\injto H$, i.e., define $h=h'\circ y$.
\end{enumerate}

\begin{figure}[h]
\[\scalebox{1.2}{
\begin{tikzpicture}[node distance=2em,shape=rectangle,outer sep=1pt,inner sep=2pt,label distance=-1.25em]
\node(X){$X$};
\node(L)[strictly right of=X]{$L$};
\node(K)[strictly right of=L]{$K$};
\node(R)[strictly right of=K]{$R$};
\node(Y)[strictly right of=R]{$Y$};
\node(D)[strictly below of=K]{$D$};
\node(G)[strictly left of=D]{$G$};
\node(H)[strictly right of=D]{$H$};
\draw[monomorphism] (X) -- node[overlay,above]{\footnotesize $x$} (L);
\draw[altmonomorphism] (K) -- node[overlay,above]{\footnotesize $l$} (L);
\draw[monomorphism] (K) -- node[overlay,above]{\footnotesize $r$} (R);
\draw[altmonomorphism] (Y) -- node[overlay,above]{\footnotesize $y$} (R);
\draw[altmonomorphism] (D) -- node[overlay,above]{\footnotesize $l^{*}$}(G);
\draw[monomorphism] (D) -- node[overlay,above]{\footnotesize $r^{*}$}(H);
\draw[monomorphism] (X) -- node[overlay,below left](g){\footnotesize $g$} (G);
\draw[monomorphism] (L) -- node[overlay,left]{\footnotesize $g'$} (G);
\draw[monomorphism] (K) -- node[overlay,left]{\footnotesize } (D);
\draw[monomorphism] (R) -- node[overlay,right]{\footnotesize $h'$}(H);
\draw[monomorphism] (Y) -- node[overlay,below right](h){\footnotesize $h$}(H);
\draw[morphism,dashed] (X) edge[bend left=30] node[above]{\footnotesize $i$} (Y);
\draw[morphism,dashed] (G) edge[bend right=30] node[below]{\footnotesize $\tr$} (H);
\draw[draw=none] (L) -- node[overlay]{\footnotesize (1)} (D);
\draw[draw=none] (R) -- node[overlay]{\footnotesize (2)} (D);
\draw[draw=white] (g) -- node[overlay]{=} (L);
\draw[draw=white] (h) -- node[overlay]{=} (R);
\node(c)[outer sep=0pt,inner sep=0pt,node distance=0em,strictly above of=L]{\tikz[draw=black,fill=lightgray]{
\filldraw (0,0) -- (-0.12,0.5) -- node[above,outer sep=1ex]{\footnotesize{$\ac$}} (0.12,0.5) -- (0,0);}};
\end{tikzpicture}}\]
\caption{\label{fig:DPO}A direct transformation}
\end{figure}

A triple $\tuple{g,h,i}$ with partial\footnote{A \emph{partial} morphism $i\colon X\injpar Y$ is an injective morphism $X'\injto Y$ such that $X'\subseteq X$.}  morphism $i=y^{-1} \circ r \circ l^{-1}\circ x$ (called \emph{interface relation}) is in the semantics of $\prule$, denoted by $\psem{\prule}$, if there is an injective morphism $g'\colon L\injto G$ such that $g=g'\circ x$ and $g'\models\ac$, $G\dder_{p,g',h'} H$, and $h=h'\circ y$. We write $G\dder_{\prule,g,h,i} H$ or short $G\dder_{\prule} H$. 
Given graphs $G$, $H$\/  and a finite set $\R$\/ of rules, $G$ \emph{derives} $H$\/ by $\R$\/ if $G \cong H$ or there is a sequence of direct transformations $G=G_0\dder_{\prule_1,g_1,h_1}G_1\dder_{\prule_2,h_1,h_2}\ldots\dder_{\prule_n,h_{n-1},h_n}G_n=H$ with $\prule_1,\dots,\prule_n\in \R$. In this case, we write $G\der_{\R} H$\/ or just $G \der H$.
\end{definition}

\begin{notation}
If both interfaces of $\prule=\tuple{x,p,\ac,y}$ are empty, we write $\prule=\tuple{p,\ac}$. 
If additionally $\ac=\ctrue$, we write $\prule=\tuple{p}$ or short $p$.
A plain rule $p=\brule{L}{K}{R}$ sometimes is denoted by $L\dder R$ where indexes in $L$ and $R$ refer to the corresponding nodes. 
Moreover, $\select(x,\ac)$ and $\unselect(x)$ denote the rules  $\tuple{x,\id,\ac}$ (selection of elements) and 
$\tuple{\id,\y}$ (unselection of selected elements), respectively,  
where $\id$ denotes the identical rule $\brule{L}{L}{L}$. Additionally, $\select(x)$ abbreviates $\select(x,\ctrue)$.
\ignore{We use the following abbreviations:
\[\begin{array}{lcll}
\select(x,\ac)&::=&\tuple{x,\id,\ac}&\mbox{(selection of elements)}\\
\unselect(x)&::=&\tuple{\id,\y}&\mbox{(unselection of selected elements)}\\
\end{array}\]
where $\id=\brule{L}{L}{L}$ demotes the \emph{identical} plain rule with left-hand side $L$. Additionally, $\select(x)$ abbreviates $\select(x,\ctrue)$.}
\end{notation}

\begin{example}\label{ex:interface}
Consider the rule $\prule=\tuple{x,p,y}$ with the plain rule $p=\brule{\onenode{1}}{\onenode{1}}{\twonodesedge{1}{2}{}}$, and the interface morphisms $x\colon\onenode{1}\injto \onenode{1}$, $y\colon\onenode{1}\injto\twonodesedge{1}{2}{}$ (see Figure \ref{fig:DPOex}). 
\begin{figure}[h!]
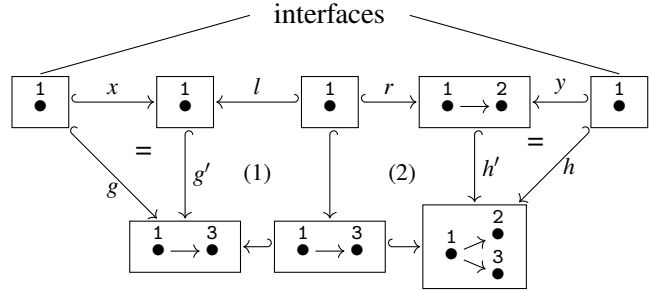

\[
\tikz[node distance=5em,shape=rectangle,outer sep=1pt,inner sep=2pt,label distance=-1.25em]{
\node(G)[draw=black]{$\twonodesedge{1}{3}{}$};
\node(D2)[draw=black,right of=G]{$\twonodesedge{1}{3}{}$};
\node(H2)[draw=black,right of=D2]{$\interfacestar{1}{2}{3}$};
\node(R2)[draw=black,above of=H2]{$\twonodesedge{1}{2}{}$};
\node(K2)[draw=black,above of=D2]{$\onenode{1}$};
\node(L2)[draw=black,left of=K2]{$\onenode{1}$};
\node(X1)[left of=L2,draw]{$\onenode{1}$};
\node(Y1)[right of=R2,draw]{$\onenode{1}$};
\node(i)[above of=K2,node distance=3em]{\begin{tabular}{c}interfaces \end{tabular}};
\draw[-] (i.west) edge node[above]{} (X1.north);
\draw[-] (i.east) edge node[above]{} (Y1.north);
\draw[altmonomorphism] (K2) -- node[above]{\footnotesize $l$} (L2);
\draw[monomorphism] (K2) -- node[above]{\footnotesize $r$} (R2);
\draw[altmonomorphism] (D2) -- node[below]{\footnotesize} (G);
\draw[monomorphism] (D2) -- node[below]{\footnotesize } (H2);
\draw[draw=none] (L2) -- node[overlay]{\footnotesize (1)} (D2);
\draw[draw=none] (R2) -- node[overlay]{\footnotesize (2)} (D2);
\draw[monomorphism] (L2) -- node[right]{\footnotesize$g'$} (G);
\draw[monomorphism] (K2) -- (D2);
\draw[monomorphism] (R2) -- node[right]{\footnotesize $h'$} (H2);
\draw[monomorphism] (X1) -- node[above](x){\footnotesize $x$} (L2);
\draw[altmonomorphism] (Y1) -- node[above](y){\footnotesize $y$} (R2);
\draw[monomorphism] (X1) -- node[below](g){\footnotesize $g$}  (G);
\draw[altmonomorphism] (Y1) --  node[right](h){\footnotesize $h$}  (H2);
\draw[draw=white] (g) -- node[overlay]{=} (L2);
\draw[draw=white] (h) -- node[overlay]{=} (R2);
}\]
\caption{\label{fig:DPOex}A direct example transformation}
\end{figure}

Each injective morphism~$g\colon\onenode{}~\injto~\twonodesedge{}{}{}$ fixes a node in the host graph. 
In general, the morphism~$g$ restricts the allowed matches $g'$ from the left-hand-side into the host graph by $g'=g\circ x$. The plain rule is applied at $g'$ according the double-pushout approach yielding the comatch $h'\colon\twonodesedge{1}{2}{}\injto \interfacestar{1}{2}{3}$. Defining $h=h'\circ y$, we fix the node 1 for the next rule application. It says that at this position (and no other) the rule shall be applied.
\end{example}


With every transformation $t\colon G\der H$, a partial track morphism can be associated that ``follows'' the items of $G$ through the transformation: this morphism is undefined for all items in $G$ that are removed by~$t$, and maps all other items to the corresponding items in $H$.
\begin{definition}[track morphism \cite{Plump05a}]The \emph{track morphism} $\tr_{G \dder H}$ from $G$ to $H$ is the partial morphism defined by $\tr_{G \dder H}(x)=r^*{(l^*}^{-1}(x))$ if $x\in D$ and \emph{undefined} otherwise, where the morphisms $l^*\colon D\injto G$ and $r^*\colon D\injto H$ are the induced morphisms of $l\colon K\injto L$ and $r\colon K\injto R$, respectively (see Figure~\ref{fig:DPO}). 
Given a transformation $G \der H$, $\tr_{G \der H}$ is defined by induction on the length of the transformation:  $\tr_{G \der H}=\iso$ for an isomorphism $\iso\colon G\to H$ and $\tr_{G \der H}=\tr_{G'\dder H}\circ\tr_{G\der G'}$ for $G \Rightarrow^{+} H = G\der G'\dder~H$.
\end{definition}
\begin{example}\label{ex:interface2}For the direct transformation $t\colon G\dder_\prule H$ in Example~\ref{ex:interface}, the track morphism is $\tr_t\colon\twonodesedge{1}{3}{}\injto\interfacestar{1}{2}{3}$. For the direct inverse transformation $t'$, $\tr_{t'}\colon\interfacestar{1}{2}{3}\rightharpoonup \twonodesedge{1}{3}{}$ is partial.\end{example}
\ignore{\begin{example}For the transformation below, the track morphism is drawn in dotted lines. It is undefined for the left node and the edge, which are deleted by the transformation, and maps the preserved items, i.e., the two right nodes, and the edge in $G$ to the corresponding node and edge in $H$.
\[\scalebox{1.3}{
\tikz[node distance=1.5em,label distance=1pt]{
\node(0){$\circ$};
\node(1)[strictly below left of=0]{$\circ$};
\node(2)[strictly below right of=0]{$\circ$};
\draw[->](0) to node(a){} (1);
\draw[->](0) to node(b){} (2);
\node(0')[strictly right of=0,node distance=8em]{$\circ$};
\node(1')[strictly below left of=0']{\white $\circ$};
\node(2')[strictly below right of=0']{$\circ$};
\draw[->](0') to node(b'){} (2');
\draw[->,dotted] (0) to (0');
\draw[->,dotted] (2) to (2');
\draw[->,dotted,] (b) to (b');
\node(ulg)[below left of=1,node distance=1.5em] {};
\node(hg)[right of =2,node distance=1em] {};
\node(org)[above of =hg,node distance=2.6em] {};
\draw[gray] (org) rectangle (ulg);    
\node(G)[above of=1,node distance=2em]{$G$}; 
\node(ulh)[below left of=1',node distance=1.5em] {};
\node(hh)[right of =2',node distance=1em] {};
\node(orh)[above of =hh,node distance=2.5em] {};
\draw[gray] (orh) rectangle (ulh);    
\node(H)[above of=1',node distance=2em]{$H$}; 
}}\]
\end{example}}


Graph programs are made of sets of rules with interface, {non-deterministic choice, sequential composition,  as-long-as possible iteration, and the try-statement. 
\begin{definition}[graph programs]\label{def:prog}\label{terminating} The set of \emph{(graph) programs with interface $X$}, $\Prog(X)$, is defined inductively: Consider
\[\begin{tabular}{lll}
(1)& Every rule $\prule$ with interface $X$ (and $Y$) is in $\Prog(X)$.\\
(2)& If $P,Q\in\Prog(X)$, then $\{P,Q\}$ is in $\Prog(X)$ &(nondeterministic choice).\\
(3)& If $P\in\Prog(X)$ and $Q\in\Prog(Y)$, then $\tuple{P;Q}\in\Prog(X)$ & (sequential composition).\\
(4)& If $P\in\Prog(X)$, then $P\downarrow$, and $\try P$ are in $\Prog(X)$ & (iteration \& try).\\
\end{tabular}\]
\newpage 
The \emph{semantics} of a program $P$ with interface $X$, denoted by $\psem{P}$, is a set of triples  such that, for all $\tuple{g,h,i}\in\psem{P}$, $X=\dom(g)=\dom(i)$\footnote{For a partial morphism $i$, $\dom(i)$ and $\ran(i)$ denote the domain and codomain of~ $i$, respectively.} and $\dom(h)=\ran(i)$, and is defined as follows:
\[\begin{array}{llcl}
(1) &\psem{\prule}&&\mbox{as in Definition \ref{def:rule}}\\
(2)&\psem{\{P,Q\}} &=& \psem{P}\cup\psem{Q}\\
(3) &\psem{\tuple{P;Q}}&=&\{\tuple{g_1,h_2,i_2{\circ}i_1}\mid \tuple{g_1,h_1,i_1}{\in}\psem{P}, \tuple{g_2,h_2,i_2}{\in}\psem{Q}\mbox{ and }h_1=g_2\}\\
(4) &\psem{\aslong{P}}& = &\{\tuple{g,h,\id}\in P^*\mid\nexists h'.\tuple{h,h',\id}\in\psem{\Fix(P)}\} \\
&\psem{\try P}& =& \{\tuple{g,h,i} \mid \tuple{g,h,i} \in \psem{P}\} \cup \{\tuple{g,g,\id} \mid \NE h.\tuple{g,h,i} \in \psem{P}\}
\end{array}\]
where $P^*=\bigcup_{j=0}^\infty P^j$ with $P^0 = \Skip$, $P^{j} = \tuple{\Fix(P); P^{j-1}}$ for $j>0$ 
and $\psem{\Fix(P)}=\{\tuple{g,h\circ i,\id}\mid \tuple{g,h,i}\in\psem{P}\}$. 
Two programs $P,P'$ are \emph{equivalent}, denoted $P\equiv P'$,  if $\psem{P}=\psem{P'}$. 
A program $P$ is \emph{terminating} if the relation $\to$ is terminating. \ignore{i.e., the is no infinite chain.} 
\end{definition}

The statement $\Skip$ is the identity element $\select(\id, \ctrue)$ of sequential composition.

\begin{example}\label{ex:interface3}Consider a slightly modified example as in Example \ref{ex:interface}. 
For restricting the applicability of the plain rule $\AddEdge=\brule{\onenode{1}}{\onenode{1}}{\twonodesedge{1}{2}{}}$ to a fixed node, the rule is equipped with a right interface $y_1\colon\onenode{1}\injto\twonodesedge{1}{2}{}$ yielding the rule $\AddEdge_1=\tuple{\AddEdge,y_1}$ as well as with a left interface $x_2\colon\onenode{1}\injto\onenode{1}$ yielding the rule $\AddEdge_2=\tuple{\AddEdge,x_2}$. By Definitions \ref{def:rule} and  \ref{def:prog}, $h'_1\circ y_1=h_1=g_2=g'_2\circ x_2$, i.e., the middle diagram commutes (see Figure \ref{fig:sequence}). 

\begin{figure}[h]
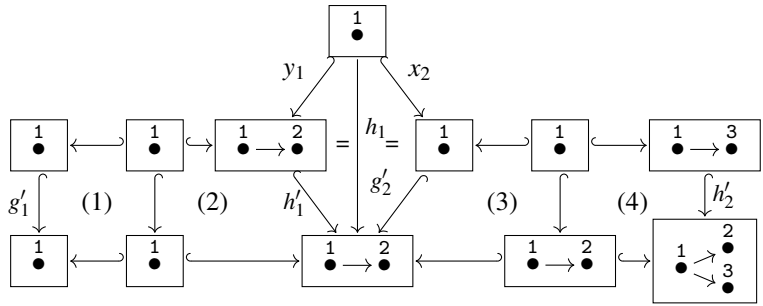

\[\tikz[node distance=4em,shape=rectangle,outer sep=1pt,inner sep=2pt,label distance=-1.25em]{
\node(space){};
\node(G)[draw=black,node distance=4em,strictly below of=space]{$\twonodesedge{1}{2}{}$};
\node(D1)[draw=black,node distance=7em,left of=G]{$\onenode{1}$};
\node(H1)[draw=black,left of=D1]{$\onenode{1}$};
\node(R1)[draw=black,above of=H1]{$\onenode{1}$};
\node(K1)[draw=black,right of=R1]{$\onenode{1}$};
\node(L1)[draw=black,right of=K1]{$\twonodesedge{1}{2}{}$};
\node(L1aux)[node distance=2.5em,right of=L1]{\footnotesize =};
\node(D2)[draw=black,node distance=7em,right of=G]{$\twonodesedge{1}{2}{}$};
\node(H2)[draw=black,node distance=5em,right of=D2]{$\interfacestar{1}{2}{3}$};
\node(R2)[draw=black,above of=H2]{$\twonodesedge{1}{3}{}$};
\node(K2)[draw=black,above of=D2]{$\onenode{1}$};
\node(L2)[draw=black,left of=K2]{$\onenode{1}$};
\node(L2aux)[node distance=1.8em,left of=L2]{\footnotesize =};
\node(K)[draw=black,strictly above of=G,node distance=6em]{$\onenode{1}$};
\draw[altmonomorphism] (K1) -- (R1);\draw[monomorphism] (K1) -- (L1);
\draw[altmonomorphism] (K2) -- (L2);\draw[monomorphism] (K2) -- (R2);
\draw[altmonomorphism] (D1) --  node[below]{\footnotesize} (H1);
\draw[monomorphism] (D1) -- node[below]{\footnotesize } (G);
\draw[altmonomorphism] (D2) -- node[below]{\footnotesize} (G);
\draw[monomorphism] (D2) -- node[below]{\footnotesize } (H2);
\draw[monomorphism] (L1) -- node[left]{\footnotesize{$h'_1$}} (G);
\draw[monomorphism] (K1) -- (D1);
\draw[monomorphism] (R1) -- node[left]{\footnotesize{$g'_1$}} (H1);
\draw[monomorphism] (L2) -- node[above left]{\footnotesize{$g'_2$}} (G);
\draw[monomorphism] (K2) -- (D2);
\draw[monomorphism] (R2) -- node[right]{\footnotesize{$h'_2$}} (H2);
\draw[altmonomorphism] (K) -- node[above left]{\small $y_1$} node[below right]{\footnotesize \ignore{$=$}} (L1);
\draw[monomorphism] (K) --  node[above right]{\small $x_2$} node[below left]{\footnotesize \ignore{$=$}}  (L2);
\draw[draw=none] (R1) -- node[overlay]{\small (1)} (D1);
\draw[morphism] (K) --  node[above right]{\footnotesize $h_1$}  (G);
\draw[draw=none] (L1) -- node[overlay]{\small (2)} (D1);
\draw[draw=none] (L2) -- node[overlay]{\small (3)} (D2);
\draw[draw=none] (R2) -- node[overlay]{\small (4)} (D2);
}\]
\caption{\label{fig:sequence}A sequence of direct transformations}
\end{figure}
\end{example}


\ignore{In the double-pushout approach, the \emph{dangling condition} for a rule $\prule=\bshortrule{L}{K}{R}$ and an injective morphism $g\colon L\injto G$ requires: ``No edge in $G{-}g(L)$ is incident to a node in $g(L{-}K)$''. For this condition, there is a program such that after application the dangling condition is satisfied.
\begin{fact}[guarantee of rule application \cite{Habel-Sandmann18a}]
For every\ignore{node-deleting\footnote{A rule $p = \tuple{L\injlto K\injto R}$ is \emph{node-preserving} if $|V_L| = |V_K|$. It is \emph{node-deleting} if it is not node-preserving, i.e. $|V_L| > |V_K|$.}} rule $\prule$ with left-hand side $L$ and every match $g \colon L \injto G$, there is an edge-deleting program $P_{\acdang}(\prule)$,
such that $\PA G \dder_{P_\acdang(\prule)} H$, with partial track morphism $\tr\colon G \injto G'$,
there is an application $G \dder_{P_\acdang(\prule)} G' \dder_{\prule, \tr \circ g} H$ of the rule $\prule$.
\end{fact}}


The construction $\Shift$ ``shifts'' existential conditions over morphisms into a disjunction of existential application conditions. 

 \begin{lemma}[$\Shift$ \cite{Habel-Pennemann09a}]\label{lem:shift}\label{lem:left}\label{lem:pres} 
There is a construction $\Shift$, such that the following holds. 
Let $d$ be condition  over $A$ and $b\colon A\injto R,n\colon R\injto H$. Then $n\circ b\models d\iff n \models \Shift(b,d)$.
\end{lemma}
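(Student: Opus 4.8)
The plan is to construct $\Shift(b,d)$ by recursion on the structure of the condition $d$, and to prove the correctness statement $n\circ b\models d\iff n\models\Shift(b,d)$ by a parallel induction. First I would handle the Boolean and base cases, which are immediate: set $\Shift(b,\ctrue)=\ctrue$, $\Shift(b,\neg c)=\neg\Shift(b,c)$, and $\Shift(b,\wedge_{i\in I}c_i)=\wedge_{i\in I}\Shift(b,c_i)$; the correctness for these follows directly from the semantics of $\neg$ and $\wedge$ together with the induction hypothesis, since $b$ is fixed throughout.

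The heart of the construction is the existential case $d=\exists(a,c)$ with $a\colon A\injto C$. Here I would enumerate, up to isomorphism, all ``gluings'' of $C$ and $R$ along $A$: concretely, all pairs $(a'\colon R\injto E,\; b'\colon C\injto E)$ such that the square with $a,b,a',b'$ commutes, $a'$ and $b'$ are jointly surjective, and $a'$ is injective (the relevant set is finite because $E$ can be taken as a quotient of the finite graph $C+_A R$). Define
\[
\Shift(b,\exists(a,c))=\bigvee_{(a',b')\in\mathcal{E}}\exists\bigl(a',\ \Shift(b',c)\bigr),
\]
where $\mathcal{E}$ is that finite set of jointly-surjective commuting pairs. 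For correctness I would argue both directions. If $n\circ b\models\exists(a,c)$, there is an injective $q\colon C\injto H$ with $q\circ a=n\circ b$ and $q\models c$; factoring the pair $(n,q)$ through the jointly-surjective/injective image inside $H$ yields some $(a',b')\in\mathcal{E}$, an injective $n'\colon E\injto H$ with $n'\circ a'=n$ and $n'\circ b'=q$; then $n'\circ b'\models c$, so by the induction hypothesis $n'\models\Shift(b',c)$, hence $n\models\exists(a',\Shift(b',c))$, hence $n\models\Shift(b,\exists(a,c))$. Conversely, if $n\models\exists(a',\Shift(b',c))$ for some $(a',b')\in\mathcal{E}$, pick injective $n'\colon E\injto H$ with $n'\circ a'=n$ and $n'\models\Shift(b',c)$; set $q=n'\circ b'$, which is injective since $b'$ is a (injective, as part of a jointly surjective pair with injective $a'$) morphism into $E$ and $n'$ is injective — here I need the fact that in these gluings $b'$ is automatically injective, which I would note follows from $a'$ being injective and the pair being jointly surjective. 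Then $q\circ a=n'\circ b'\circ a=n'\circ a'\circ b=n\circ b$ by commutativity, and by the induction hypothesis $q=n'\circ b'\models c$, so $n\circ b\models\exists(a,c)$.

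The main obstacle I expect is purely bookkeeping rather than conceptual: making precise the finite index set of gluings and verifying that the factorization of an arbitrary pair $(n,q)$ in $H$ through its image lands in that set (and that this image construction is functorial enough to make the square commute and the injectivity claims go through). One subtlety to be careful about is that the paper's conditions are built from \emph{proper} inclusion morphisms, so strictly one should afterwards normalise each $\exists(a',\Shift(b',c))$ — discarding or rewriting the summands where $a'$ is an isomorphism (using $\exists(\mathrm{iso},c)\equiv c$ up to renaming) — but this does not affect the semantic equivalence and can be folded into the construction. Since the statement as given only asserts the existence of $\Shift$ with the stated property, establishing the two implications above by structural induction completes the proof.
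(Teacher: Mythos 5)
Your overall route is the same as the paper's (Construction~\ref{const:Shift}, cited from Habel--Pennemann): recursion over the structure of $d$, with the existential case handled by a finite disjunction over jointly surjective commuting pairs, and correctness by structural induction using image factorization in $H$. However, there is one genuine error in how you define the index set. You require only that $a'$ be injective and claim that injectivity of $b'$ ``follows from $a'$ being injective and the pair being jointly surjective.'' That is false. Counterexample: let $A=\emptyset$, let $R$ be a single node, let $C$ consist of two nodes, and let $E$ be a single node with $a'\colon R\injto E$ an isomorphism and $b'\colon C\to E$ collapsing both nodes; the square commutes trivially, the pair is jointly surjective, $a'$ is injective, yet $b'$ is not.

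This is not mere bookkeeping: with such pairs admitted, the right-to-left direction of your equivalence breaks. In the example above, take $d=\PE(\emptyset\injto C,\ctrue)$ (``there exist two distinct nodes''), $H$ a single node and $n\colon R\injto H$ injective. Then $n$ satisfies the disjunct $\PE(a',\ctrue)$ coming from the collapsing pair, so $n\models\Shift(b,d)$ under your definition, but $n\circ b\not\models d$ since there is no injective morphism from the two-node graph $C$ into $H$ (satisfaction in this paper is via injective morphisms). Your converse argument silently uses injectivity of $b'$ to get an injective witness $q=n'\circ b'$, which is exactly where it fails. The repair is simply to require \emph{both} $a'$ and $b'$ to be injective in the defining set, as in the paper's $\F$; your forward direction is unaffected, because factoring the pair $(n,q)$ of injective morphisms through their joint image in $H$ automatically yields injective $a'$ and $b'$. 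With that correction, your construction and proof coincide with the paper's.
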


\begin{construction}\label{const:Shift} \label{const:Left}\label{const:pres}
For rules $\prule$ with plain rule $p=\brule{L}{K}{R}$, the construction is as follows.

\[\begin{tabular}{l} 
\tikz[shape=rectangle,node distance=2em,shape=circle,outer sep=0pt,inner sep=1pt]{
\node(P){$A$};
\node(C)[strictly below of=P]{$C$};
\node(space)[node distance=1.5em,strictly below of=C]{};
\node(P')[strictly right of=P]{$R$};
\node(C')[strictly right of=C]{$R'$};
\draw[monomorphism] (P) -- node[overlay,left]{\small $a$} (C);
\draw[monomorphism,dashed] (P') -- node[overlay,right,inner sep=0pt]{\small $a'$} (C');
\draw[draw=none] (P) -- node[overlay]{(1)} (C');
\draw[monomorphism] (P) -- node[overlay,above]{\small $b$} (P');
\draw[monomorphism,dashed] (C) -- node[overlay,below]{\small $b'$} (C');
\node(c)[outer sep=0pt,inner sep=0pt,node distance=0em,strictly below of=C]{
\tikz[baseline,draw=black,fill=lightgray]{\filldraw (0,0) -- node[below,pos=0.6,overlay,outer sep=1ex]
{\small $d$} (0.12,-0.3) -- (-0.12,-0.3) -- (0,0);}};
\node(c')[outer sep=0pt,inner sep=0pt,node distance=0em,strictly below of=C']{
\tikz[baseline,draw=black,fill=lightgray]{\filldraw (0,0) -- node[below,pos=0.6,overlay,outer sep=1ex]
{\small } (0.12,-0.3) -- (-0.12,-0.3) -- (0,0);}};}
\end{tabular}
\hspace{0.3cm}
\hfill
\begin{tabular}{l} 
$\Shift(b,\ctrue):=\ctrue$.\\
$\Shift(b,\PE(a,d)):=\bigvee_{(a',b')\in\F}\PE(a',\Shift(b',d))$ where \\
$\F=\{(a',b')\mid\mbox{$b'\circ a=a'\circ b$, $a',b'$ inj, $(a',b')$ jointly surjective\footnotemark}\}$\\
$\Shift(b,\neg d):=\neg\Shift(b,d)$, $\Shift(b,\wedge_{i\in I}d_i):=\wedge_{i\in I}\Shift(b,d_i)$.
\end{tabular}\]
\footnotetext{A pair $(a',b')$ is \emph{jointly surjective} if for each $x\in R'$ there is a preimage $y\in R$ with $a'(y)=x$ or $z\in C$ with $b'(z)=x$.}

\end{construction}
\footnotetext{For a rule $p=\brule{L}{K}{R}$, $p^{-1}=\brule{R}{K}{L}$ denotes the \emph{inverse} rule. For $L'\dder_p R'$ with intermediate graph $K'$, $\brule{L'}{K'}{R'}$ is the \emph{derived} rule.}

\begin{example}[$\Shift$] The application of Shift to  the injective morphism $b\colon\emptyset\injto\onenode{1}$ and the condition $d=\PE(\emptyset\injto \onenodeloop{2}{})$ yields the condition $\Shift(b,d)=\PE(\onenode{1}\injto\onenode{1}\onenodeloop{2})\vee\PE(\onenode{1}\injto\onenodeloop{1=2})$. The application of $\Shift$ to $b$ and  of the condition $d'=\PE(\emptyset\injto \onenodeloop{2},\NE\onenodeloops{2})$ over $b$ yields the  condition $\Shift(b,d')=\PE(\onenode{1}\injto\onenode{1}\onenodeloop{2},\NE\onenode{1}\onenodeloops{2})\vee\PE(\onenode{1}\injto\onenodeloop{1=2},\NE\onenodeloops{1=2})$.
\end{example}

\section{Graph repair}\label{sec:repair}

In this section, we define repair programs and look for\ignore{maximally preserving, terminating} repair programs for \ignore{(proper) }graph conditions. 

A~repair program for a constraint is a program such that, for every application to a graph, the resulting graph satisfies the constraint. More generally, a repair program for a condition over a graph $A$ is a program~$P$ with interface $A$ such that for  every triple $\tuple{g,h,i}$ in the semantics of $P$, the composition of the interface relation $i$ and the comatch $h$ satisfies the condition.

\begin{definition}[repair programs]\label{def:repair} 
A program $P$  is a \emph{repair program} for a constraint~$d$ if, for all transformations $G\dder_P H$, $H\models d$. An $A$-preserving program \footnote{A program is \emph{$A$-preserving}  if the dependency relation $i$ of the program is total. If, additionally, the codomain of $i$ is $A$, the program is a \emph{program with interfaces} $A$ or short \emph{$A$-program.} For a rule set with interfaces $A$, we speak of \emph{$A$-set}.} $P$ is a \emph{repair program} for a condition $d$ over $A$, if, for all triples $\tuple{g,h,i}\in\psem{P}$, $h\circ i\models\ d$. 
\end{definition}
\vspace{-0.5em}
\[\tikz[node distance=2em,shape=rectangle,outer sep=1pt,inner sep=2pt]{
\node(A1){$A$};
\node(A2)[node distance=4em,strictly right of=A1]{$A$};
\node(G)[strictly below of=A1]{$G$};
\node(H)[strictly below of=A2]{$H$};
\draw[morphism] (A1) -- node[above]{\footnotesize $i$} (A2);
\draw[derivation] (G) -- node[below]{\footnotesize $P$} (H);
\draw[morphism,] (A1) -- node[left]{\footnotesize $g$} (G);
\draw[morphism] (A2) -- node[right]{\footnotesize $h$} (H);
\node(acL)[outer sep=0pt,inner sep=0pt,node distance=0em,strictly left of=A1]
{\tikz[baseline,draw=black,fill=lightgray]{\filldraw (0,0) -- node[above,pos=0.6,overlay,outer sep=1ex](acL2){\footnotesize $d$} (-0.8,0.12) -- (-0.8,-0.12) -- (0,0);}};
}\]
\begin{example}\label{ex:exists}
For the condition $c=\PE(\onenode{1}\injto\onenodeloop{1}{})$, the $\onenode{}$-preserving program $P_{c} = \try \R$ is a \ignore{\red terminating} repair program for $c$, where $\R = \tuple{x,\onenode{1}\dder\onenodeloop{1}{},\NE\onenodeloop{1}{},y}$, with the interface morphisms $x \colon \onenode{1} \injto \onenode{1}, y \colon \onenodeloop{1}{} \injlto \onenode{1}$.
For the constraint $d=\PA(\onenode{},\PE\onenodeloop{}{})$, meaning that every node has a loop, the program 
$\tuple{\select(\emptyset \injto \onenode{1});P_{c};\unselect(\onenode{1} \injlto \emptyset)}\downarrow$ is a \ignore{\red terminating} repair program for $d$. 
\end{example}

\begin{remark}
A program for a condition is \emph{destructive}, if it deletes the input graph and creates a
graph satisfying the condition from the empty graph.
In general, destructive programs are no repair programs for $d$ over $A\not=\emptyset$, because it is not $A$-preserving. 
\end{remark}

The most significant point are the repair programs for the basic conditions $\PE a$ and $\NE a$. 
Whenever we have repairing sets, we obtain a repair program for proper conditions.

\begin{definition}[repairing sets]\label{def:repair}Let $a\colon A\injto C$ with $A\subset C$. 
An $A$-set $\R_a$ is \emph{repairing} for $\PE a$ if $\try \R_a$ is a repair program for $\PE a$. An $A$-set $\S_a$ is \emph{repairing} for $\NE a$ if $\Rdown{\S'_a}$ (see Definition \ref{def:dangling}) is a repair program for~$\NE a$.\end{definition}

\begin{example}
For the condition $c$ from Example \ref{ex:exists}, the repairing set is $\R$.
\end{example}

\begin{definition}[The dangling-edges operator]\label{def:dangling} 
For node-deleting rules $\prule$, the dangling condition\footnote{The \emph{dangling condition} for a rule $\prule=\brule{L}{K}{R}$ and an injective morphism $g\colon L\injto G$ requires: ``No edge in $G{-}g(L)$ is incident to a node in $g(L{-}K)$''.} may be not satisfied. In this case, we consider the program $\prule'$ that fixes a match for the rule, deletes the dangling edges, and afterwards applies the rule at the match. The program corresponds with the SPO-way of rewriting \cite{Loewe93a}. The proceeding can be extended to sets of rules: For a rule set $\S$, $\S'=\{\prule'\mid\prule\in\S\}$. \\
\end{definition}
\vspace{-1em}
In the following, we show that, for basic conditions $\PE(A\injto C)$ and $\NE(A\injto C)$ over $A$, there are repairing $A$-sets $\R_a$ and $\S_a$, respectively.
The rules in $\R_a$ are increasing and of the form $B\dder C$ where $A\subseteq B\subset C$ and an application condition requiring that no larger subgraph $B'$ of $C$ occurs and the shifted condition $\NE a$ is satisfied. By the application condition, each rule can only be applied iff the condition is not satisfied and no other rule whose left-hand side includes $B$ and is larger can be applied. 
The rules in  $\S_a$ are decreasing and of the form $C\dder B$ where $A\subseteq B\subset C$ such that, if the number of edges in $C$ is larger than the one in $A$, they delete one edge and no node, and delete a node, otherwise. By $B\subset C$, both rule sets do not contain identical rules. 
The rule set $\R_a$ can be used, e.g., for the repair program of the condition $\PA(x,\PE a)$, the rule set $\S_a$ for the condition $\PE (x, \NE a)$ (see Construction~\ref{const:proper}).

\begin{lemma}[basic repair]\label{lem:basic}
For basic conditions over $A$, there are repairing sets with interfaces $A$.
\end{lemma}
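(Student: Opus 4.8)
A basic condition over $A$ is either $\PE a$ or $\NE a$ with $a\colon A\injto C$ and $A\subset C$, so I would build a repairing $A$-set for each, realising the informal description above. For $\PE a$, I plan to let $\R_a$ consist of one rule $\prule_B=\tuple{x_B,\brule{B}{B}{C},\ac_B,a}$ for every subgraph $B$ of $C$ with $A\subseteq B\subset C$ (finitely many up to isomorphism), where $x_B\colon A\injto B$ and $a\colon A\injto C$ are the inclusions and the application condition over $B$ is $\ac_B=\Shift(x_B,\NE a)\wedge\bigwedge_{B\subset B'\subseteq C}\neg\PE b_{B'}$ with $b_{B'}\colon B\injto B'$ the inclusion; the first conjunct says the selected copy of $A$ does not yet extend to a copy of $C$, the remaining conjuncts that the current $B$-match is maximal among sub-extensions of $C$. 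First I would observe that each $\prule_B$ is non-deleting, so its interface relation computes to $\id_A$; hence $\R_a$ is an $A$-set and $\try\R_a$ is $A$-preserving. Then I would prove two claims. (i) If $G\dder_{\prule_B,g,h,i}H$, the comatch $h'\colon C\injto H$ satisfies $h'\circ a=h\circ i$ (as $i=\id_A$ and $h=h'\circ a$), so $q:=h'$ witnesses $h\circ i\models\PE a$. (ii) If no rule of $\R_a$ applies at $g\colon A\injto G$, then $g\models\PE a$: supposing $g\not\models\PE a$, I would choose, among all matches $g'\colon B\injto G$ with $A\subseteq B\subset C$ and $g'\circ x_B=g$ (non-empty, since $(A,g)$ qualifies, and finite), one with $|\V_B|+|\E_B|$ maximal; since $g\not\models\PE a$ it extends to no copy of $C$ and, by maximality, to no larger $B'\subset C$, so by Lemma~\ref{lem:shift} it satisfies $\ac_B$ and $\prule_B$ is applicable -- a contradiction. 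Putting (i), (ii) and the definition of $\try$ together, every triple $\tuple{g,h,i}\in\psem{\try\R_a}$ satisfies $h\circ i\models\PE a$ (it either stems from an application of some $\prule_B$, or it is $\tuple{g,g,\id}$ with $g\models\PE a$), so $\R_a$ is repairing for $\PE a$.

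For $\NE a$, I plan to take $\S_a$ to consist of the decreasing rules $\tuple{x,\brule{C}{B}{B},\ctrue,y}$ with $x\colon A\injto C$ and $y\colon A\injto B$ the inclusions, where $B=C-\{e\}$ for some $e\in\E_C\setminus\E_A$ if $\E_C\setminus\E_A\neq\emptyset$, and $B=C-\{v\}$ for some $v\in\V_C\setminus\V_A$ otherwise. I would again check that the interface relation of every such rule computes to $\id_A$ and, since the dangling-edges operator removes only edges outside $A$, that $\S'_a$ is still an $A$-set, so that $\Rdown{\S'_a}$ is an $A$-program whose semantics consists of triples $\tuple{g,h,\id}$. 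The argument then rests on three points: a rule of $\S'_a$ is applicable at $g\colon A\injto G$ iff some match $C\injto G$ extends $g$ (dangling edges can always be deleted first), i.e.\ iff $g\not\models\NE a$; every direct transformation by $\S'_a$ strictly decreases $|\V_G|+|\E_G|$, so $\Rdown{\S'_a}$ terminates; and upon termination no match of $C$ over the tracked image of $A$ survives, i.e.\ $h\models\NE a$. Hence $h\circ\id=h\models\NE a$ for all triples in $\psem{\Rdown{\S'_a}}$, and $\S_a$ is repairing for $\NE a$.

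The pushout computation behind (i), the interface bookkeeping showing that both sets are $A$-sets, and the termination measure for $\NE a$ are routine. The hard part will be (ii) for $\PE a$: showing that, with the $\Shift$-based application conditions, the nondeterministic choice $\R_a$ is applicable at $g$ precisely when $g\not\models\PE a$, so that the maximality clause interacts correctly with the various left-hand sides $B$ -- in particular that no application of a $\prule_B$ can leave $\PE a$ false and that $\try\R_a$ never stops while $\PE a$ is still false.
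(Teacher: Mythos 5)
Your argument is correct, but it proves the lemma by a different route than the paper's own proof does. The paper states two constructions (Construction~\ref{const:basic}); its inline proof of Lemma~\ref{lem:basic} only verifies the trivial variant~(1) -- a single non-deleting rule $\tuple{\id_A,A\dder C,a}$ for $\PE a$ and a single deleting rule $\tuple{a,C\dder A,\id_A}$ for $\NE a$, each checked in two lines against the semantics of $\try$ and $\downarrow$ -- and for variant~(2) it simply cites \cite{Habel-Sandmann18a}. What you construct and verify is essentially that variant~(2): your $\R_a$ coincides with the paper's (rules $B\dder C$ for all $A\subseteq B\subset C$ with application condition $\Shift(A\injto B,\NE a)\wedge\bigwedge_{B\subset B'\subseteq C}\NE B'$), and your $\S_a$ of single-edge/single-node deleting rules matches the side condition (*). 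Your two claims are exactly what is needed: (i) an application of any $\prule_B$ yields a comatch witnessing $\PE a$, and (ii) if no rule applies at $g$ then $g\models\PE a$, which you get by picking a compatible match of maximal size $|\V_B|+|\E_B|$, noting that an extension to $C$ would contradict $g\not\models\PE a$ and an extension to a larger $B'\subset C$ would contradict maximality, so the chosen match satisfies both conjuncts of the application condition (the first one via Lemma~\ref{lem:shift}); together with the $\try$/$\downarrow$ semantics and the interface computation $i=\id_A$ this is sound. The trade-off: the paper's route is much shorter and suffices for the bare existence claim of the lemma, whereas your route is heavier but self-contained and yields precisely the repairing sets on which the later termination and maximal-preservation properties (Lemma~\ref{fac:mpres}) rely, a proof the paper otherwise outsources to \cite{Habel-Sandmann18a}. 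If you only wanted the lemma as stated, the single-rule sets of Construction~\ref{const:basic}(1) would have spared you the maximality analysis entirely.
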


There are several repairing sets for a basic condition: We present two examples of repairing sets. The first one  is quite intuitive, but, in general does not lead to a terminating and maximally preserving repair program. The second one is more complicated, but yields a terminating and maximally preserving repair program. 

\begin{construction}\label{const:basic} For  $d=\PE a$ ($\NE a$) with $a\colon A \injto C$, $A\subset C$, the sets $\R_a$ and $\S_a$ are constructed as follows.
\begin{enumerate}
\item[(1)] $\R_a=\{\tuple{\id_A,A\dder C,a}\}$ and $\S_a=\{\tuple{a,C\dder A,\id_A}\}$
\item[(2)] $\R_a=\{\tuple{b,B\dder C,\ac\wedge\ac_B,a}\mid A\injto^b B\subset C\}$ and  
$\S_a=\{\tuple{a,C\dder B,b}\mid A\injto^b B\subset C\mbox{ and (*)}\}$\\
where $\ac=\Shift(A\injto B,\NE a)$, $\ac_B=\bigwedge_{B'}\NE B'$, $\bigwedge_{B'}$ ranges over $B'$ with $B\subset B'\subseteq C$, and\\(*) $\pif \E_C\supset \E_B \pthen |\V_C|=|\V_B|,|\E_C|=|\E_B|+1 \pelse |\V_C|=|\V_B|+1$.
\end{enumerate}
\end{construction}

\begin{proof} (1) Let $\tuple{g,h,i}\in\psem{\try\R_a}$.  If $g\models\NE a$, then the rule $\tuple{\id_A,A\dder C,a}$ in $\R_a$ is applicable and $h\circ i\models\PE a$. If $g\models\PE a$, then, by the semantics of $\try$, $g=h\circ i\models\PE a$. Thus, $\try\R_a$ is a repair program for $\PE a$. Let $\tuple{g,h,i}\in\psem{\Rdown{\S'_a}}$. By the semantics of~$\downarrow$, $\S'_a$ is not applicable to the domain of $h\circ i$, i.e., $h\circ i\models\NE a$. Thus, $\Rdown{\S'_a}$ is a repair program for $\NE a$. For Construction (2), see the  proof \cite[Thm 1]{Habel-Sandmann18a}.
\end{proof}

\begin{example}\label{ex:Ra}
1. Consider the condition $d = \PE a$, with $a\colon \onenode{1}\injto\twonodesedge{1}{}{}$.
By Construction (1), the rule $\prule=\tuple{x,p,y}$ with the plain rule $p=\brule{\onenode{1}}{\onenode{1}}{\twonodesedge{1}{2}{}}$ and the interface morphisms $x\colon\onenode{1}\injto \onenode{1}$, $y\colon\onenode{1}\injto\twonodesedge{1}{}{}$ constitutes the repairing set for $\PE a$. 
By Construction (2), we obtain a repairing set $\R_a$ for $\PE a$.
\[\R_a=\left\{\begin{array}{lcl}
\prule_1=\tuple{x_1, \onenode{1}&\dder&\twonodesedge{1}{}{},\NE\twonodes{1}{}, y_1}\\
\prule_2=\tuple{x_2, \twonodes{1}{2}&\dder&\twonodesedge{1}{2}{},
\NE\twonodesedge{1}{2}{}\wedge\NE\twonodesedge{1}{}{}\onenode{2}, y_2}\\
\end{array}\right.\]
where the interface morphisms $x_i, y_i$ can be unambiguously inferred. 
The first rule requires a node and attaches a node and a real outgoing edge, provided that there do not exist two nodes. The second rule requires two nodes and attaches a real outgoing edge provided there is no real outgoing edge from the image of node 1 to the image of node 2, and no real outgoing edge at the image of node~1.
The rule set $\R_a$ can be used, e.g., for a repair program $\try \R_a$ for the condition $\PE a$.

2. Consider the condition $\NE b$, with $b\colon\onenode{1}\injto\twonodesedgeredge{1}{}{}{}$.
By Construction (1), the rule set $\{\prule\}$ with $\prule = \tuple{x,\twonodesedgeredge{1}{}{}{} \dder \onenode{1},y}$ constitutes the repairing set for $\NE b$, with interfaces~$\onenode{1}$. 
By Construction (2), the rule set $\S_b=\tuple{x,\twonodesedgeredge{1}{2}{}{}\dder\twonodesedge{1}{2}{},y}$ constitutes the repairing set for $\NE b$, with interfaces~$\onenode{1}$. 
The rule set $\S_b$ can be used, e.g., for a repair program $\S_b'\downarrow$ for the condition $\NE (\onenode{1} \injto \twonodesedgeredge{1}{}{}{})$ (see Construction~\ref{const:proper}).
\end{example}

\begin{fact}[compositions of repairing sets]\label{fac:extend}
If $\R_a, \R_a'$ are repairing sets for a basic condition $d$, then $\R_a \cup \R_a'$ is a repairing set for~$d$.
If $\R_a,\R_c$ are  repairing sets for $\PE a$ and $\PE c$, respectively, and  $a = c\circ b$, then $\R_a \cup \R_{ca}$ is a repairing set for $\PE a$, where $\R_{ca} = \{\tuple{b,p,y} \mid \tuple{p,y} \in \R_c\}$.
If $\S_a,\S_c$ are  repairing sets for $\NE a$ and $\NE c$, respectively, and  $a = c\circ b$, then $\S_a \cup \S_{ca}$ is a repairing set for $\NE a$, where $\S_{ca} = \{\tuple{b,p,y} \mid \tuple{p,y} \in \S_c\}$.\end{fact}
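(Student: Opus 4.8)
The plan is to handle the three assertions one at a time, in each case unfolding the definition of a repairing set (Definition \ref{def:repair}) and checking that the relevant composite program ($\try$ for the existential case, $\Rdown{(-)'}$ for the negative case) is still a repair program. Throughout I would use the key feature of Construction \ref{const:basic}(2): the application condition $\ac = \Shift(A\injto B,\NE a)$ attached to every rule in $\R_a$ guarantees, via Lemma \ref{lem:shift}, that a rule of $\R_a$ is \emph{applicable at a match} $g'$ only when the induced morphism into the host graph still violates $\PE a$; dually, the decreasing rules of $\S_a$ only ever shrink the graph. These two monotonicity facts are what make unions and reindexings behave well.

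\textbf{Union of two repairing sets for the same $d$.} First I would take $d = \PE a$ and show $\R_a\cup\R_a'$ is repairing, i.e.\ that $\try(\R_a\cup\R_a')$ is a repair program for $\PE a$. Let $\tuple{g,h,i}\in\psem{\try(\R_a\cup\R_a')}$. By the semantics of $\try$, either $g = h\circ i$ and no rule of $\R_a\cup\R_a'$ applies to $\dom(g)$, or $\tuple{g,h,i}\in\psem{\R_a\cup\R_a'} = \psem{\R_a}\cup\psem{\R_a'}$. In the first subcase, in particular no rule of $\R_a$ applies, so by the argument in the proof of Lemma \ref{lem:basic}(1) (the application conditions force applicability exactly when $\PE a$ fails) we get $h\circ i\models\PE a$; in the second subcase $\tuple{g,h,i}$ comes from applying a single rule of $\R_a$ or of $\R_a'$, and since each of those rules has codomain $C$ with $a\colon A\injto C$ and the right interface of the rule is $a$ itself, the comatch composed with the interface relation factors through $a$, so $h\circ i\models\PE a$. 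Hence $\try(\R_a\cup\R_a')$ is a repair program. The case $d = \NE a$ is even simpler: every rule of $\S_a\cup\S_a'$ is edge- or node-deleting, so $\Rdown{(\S_a\cup\S_a')'}$ terminates when and only when no rule of $\S_a\cup\S_a'$ (after the dangling-edges preprocessing) applies to $\dom(h\circ i)$; since $\S_a$ alone already had this property by Lemma \ref{lem:basic}, non-applicability of the union implies non-applicability of $\S_a$, which gives $h\circ i\models\NE a$.

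\textbf{Reindexing along $a = c\circ b$.} For the second assertion, write $b\colon A\injto B'$, $c\colon B'\injto C$ with $a = c\circ b$, and set $\R_{ca} = \{\tuple{b,p,y}\mid\tuple{p,y}\in\R_c\}$, so $\R_{ca}$ is $\R_c$ with its (empty or $B'$-valued) left interface replaced by the composite through $b$, making it an $A$-set. I would argue $\try(\R_a\cup\R_{ca})$ is a repair program for $\PE a$ as before: any triple in its semantics either witnesses that no rule of $\R_a\cup\R_{ca}$ applies — in which case, as no rule of $\R_a$ applies, $h\circ i\models\PE a$ by Lemma \ref{lem:basic} — or it arises from one rule application. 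If that rule is from $\R_a$ we are done as above. If it is a reindexed rule $\tuple{b,p,y}$ from $\R_c$, then its right-hand side is the codomain of some morphism in $\R_c$; since $\R_c$ is repairing for $\PE c$, the comatch-plus-interface of that rule (with interface $A$) satisfies $\PE c$, and composing $\PE c\impl\PE a$ along $c = $ (the relevant inclusion) — more precisely, $q\models\ctrue$ over $C$ and $q\circ c$ gives the required witness for $a$ — yields $h\circ i\models\PE a$. The $\S$-version is the mirror image via the decreasing rules and $\Rdown{(-)'}$; I would just note that $\S_{ca}$ only deletes, so termination of $\Rdown{(\S_a\cup\S_{ca})'}$ still forces $h\circ i\models\NE a$ because $\S_a$ is already repairing for $\NE a$.

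\textbf{Main obstacle.} The routine part is the bookkeeping of interfaces: one must check that replacing the left interface morphism $X\injto L$ of a rule in $\R_c$ by its composite with $b$ still yields a well-defined rule with interfaces $A$, and that the interface relation of the reindexed program is still total (so the "$A$-preserving'' hypothesis in Definition \ref{def:repair} is met). The genuinely delicate point — and the step I expect to require the most care — is the direction "no rule of the union applies $\Rightarrow$ the constraint holds": for $\R_a$ this relies on the precise shape of the application conditions $\ac\wedge\ac_B$ in Construction \ref{const:basic}(2), and one has to make sure that \emph{adding} the extra rules $\R_{ca}$ (or $\R_a'$) does not break the equivalence "applicable $\iff$ constraint violated'' in the wrong direction. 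It does not, because the added rules only give \emph{more} ways to repair, never fewer, so non-applicability of the larger set implies non-applicability of $\R_a$, which is the only implication we use; but this monotonicity is exactly the thing to state explicitly rather than leave implicit.
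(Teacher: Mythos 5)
The paper's own proof of this fact is literally the single word ``Straightforward,'' so there is no detailed argument to match against; judged on its own, your proposal is essentially correct and supplies the missing details, with the right skeleton: case analysis on the semantics of $\try$ (respectively the zero-iteration run of $\Rdown{\cdot}$), monotonicity of non-applicability under union, and transfer of witnesses along $a=c\circ b$. One caution, though: in two places you justify steps by appealing to the internal shape of the rules from Construction \ref{const:basic}(2) (right interface equal to $a$, application conditions of the form $\Shift(A\injto B,\NE a)\wedge\ac_B$), but the Fact is stated for \emph{arbitrary} repairing sets, which by Definition are just $A$-sets whose associated $\try$- or $\downarrow$-program is a repair program; an arbitrary repairing set need not have rules of that shape, so those particular justifications do not apply in general. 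They are also unnecessary: for the ``a rule was applied'' case, $\psem{\R_a}\subseteq\psem{\try\R_a}$ already gives $h\circ i\models\PE a$ directly from the assumption that $\try\R_a$ is a repair program; for the ``no rule applies'' case, non-applicability of the union implies non-applicability of $\R_a$ (resp.\ $\S'_a$), so $\tuple{g,g,\id}\in\psem{\try\R_a}$ (resp.\ the zero-iteration triple lies in $\psem{\Rdown{\S'_a}}$), and the repairing property of the \emph{smaller} set alone yields $g\models\PE a$ (resp.\ $h\models\NE a$). With that purely semantic phrasing your monotonicity observation becomes the whole proof, and the reindexing case reduces to the bookkeeping you already identified: a witness $q\colon C\injto H$ for $c$ at the $B$-level tracking morphism is a witness for $a$ at the $A$-level one, since $q\circ a=q\circ c\circ b$.
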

\begin{proof}Straightforward.\end{proof}

For proper conditions, a repair program can be constructed.
\begin{theorem}[repair]\label{thm:repair}
For proper conditions, \ignore{maximally preserving, terminating} repair programs can be constructed.
\end{theorem}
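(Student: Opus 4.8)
The plan is to proceed by induction on the structure of a proper condition in alternating normal form, using the repairing sets guaranteed by Lemma~\ref{lem:basic} as the base case and the program constructors (sequential composition, nondeterministic choice, iteration, and $\mathtt{try}$) together with $\Shift$ to handle the inductive steps. First I would recall from Definition~\ref{def:proper} the exhaustive list of shapes a proper condition can take: $\ctrue$; a basic condition $\PE b$ or $\NE b$; a condition of the form $\PE(a,\NE b)$; and, in general, an alternating chain $\Q(a_1,\bar\Q(a_2,\Q(a_3,\ldots)))$ that bottoms out in one of these cases. For $\ctrue$ the repair program is $\Skip$. For the basic conditions $\PE b$ and $\NE b$ over $A$, Lemma~\ref{lem:basic} gives repairing $A$-sets $\R_b$ and $\S_b$, so $\try\R_b$ and $\Rdown{\S'_b}$ are the required repair programs (this is exactly Definition~\ref{def:repair}).

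The inductive step is where the real work lies, and I would organize it around Construction~\ref{const:proper} (referenced but not yet shown in the excerpt). Given a proper condition $c = \PA(a,c')$ with $a\colon A\injto C$ and $c'$ a proper condition over $C$, and given by induction a repair program $P_{c'}$ for $c'$ with interface $C$, the idea is: iterate over all matches $q\colon C\injto G$ extending the current interface morphism, and on each such match run $P_{c'}$ to enforce $c'$; this is realized by a program of the shape
\[
\bigl\langle\,\select(a);\ P_{c'};\ \unselect(a)\,\bigr\rangle\!\downarrow,
\]
suitably guarded so that the iteration terminates once every occurrence of $C$ satisfies $c'$. Dually, for $c = \PE(a,c')$ one selects a single match of $C$ (if one exists, or creates one using a repairing set for $\PE a$ via Fact~\ref{fac:extend}), runs $P_{c'}$ there, and unselects; the $\mathtt{try}$ and $\Shift$ machinery is used to make sure that repairing the inner condition does not destroy the witnessing match, and that when $c$ is already satisfied the program acts as the identity. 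The properness hypothesis is used precisely here: it guarantees (via Fact~\ref{fac:proper}) that each inner condition encountered is satisfiable, so the subprograms one wants to invoke actually exist and the recursion does not get stuck on an unsatisfiable $\cfalse$ at an existential position.

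The main obstacle I anticipate is the interaction between the enforcement of an inner condition and the quantifier above it: running $P_{c'}$ on one match of $C$ may create or destroy other matches of $C$, so a naive "for each match, repair" loop need not terminate and need not leave previously-repaired matches satisfied. Handling this requires the application conditions built into the repairing sets (the $\ac_B$ guards and the $\Shift(\cdot,\NE a)$ guards in Construction~\ref{const:basic}(2)) to ensure monotone progress, and it requires arguing — by an induction on graph size or on the number of violating matches — that the outer $\downarrow$ is terminating and that its fixed points are exactly the graphs satisfying $c$. I would therefore prove, as the key lemma inside the argument, that the composed program is both a repair program for $c$ (correctness: every output satisfies $c$, by unwinding the semantics of $\downarrow$, $\mathtt{try}$, and sequential composition together with the inductive hypothesis and Lemma~\ref{lem:shift}) and $A$-preserving (so that it qualifies as a repair program in the sense of Definition~\ref{def:repair}), the latter being immediate since $\select$, $\unselect$, and the repairing rules all have total interface relations. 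Termination, where claimed, follows because each repairing rule strictly changes a well-founded measure (edge count, then node count) on the relevant subgraph.
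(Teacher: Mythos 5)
Your proposal matches the paper's proof in essence: the same structural induction over proper conditions with the same construction ($\Skip$ for $\ctrue$, $\try\,\R_a$ and $\Rdown{\S'_a}$ for the basic cases via Lemma~\ref{lem:basic}, $P_{\PE a};\tuple{\select(a);P_c;\unselect(a)}$ for $\PE(a,c)$, and $\Rdown{\tuple{\select(a,\neg c);P_c;\unselect(a)}}$ for $\PA(a,c)$), with correctness obtained by unwinding the program semantics and using that the inner program is interface-preserving so the witnessing match survives. Note only that the theorem itself claims neither termination nor maximal preservation---these are deferred to a separate lemma for Construction~\ref{const:basic}(2)---and the semantics of $\downarrow$ already restricts to terminated runs, so the termination obstacle you anticipate is not needed for the correctness argument here.
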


\begin{construction}\label{const:proper} For proper conditions $d$ over $A$,  the $A$-program $P_d$  is constructed inductively as follows.
\begin{enumerate}
\item[(1)] For $d = \ctrue$, $P_d = \Skip$.
\item[(2)] For $d = \PE a$, $P_d = \try\R_a$.
\item[(3)] For $d = \NE a$, $P_d = \Rdown{\S_a'}$.
\item[(4)] For $d = \PE(a,c)$, $P_d = P_{\PE a};\tuple{\select(a);P_c;\unselect(a)}$.
\item[(5)] For $d = \PA(a,c)$, $P_d = \Rdown{\tuple{\select(a,\neg c);P_c;\unselect(a)}}$
\end{enumerate}
where $a\colon A\injto C$ with $A\subset C$, $\R_a$ and $\S_a$ are repairing $A$-sets, and $P_c$ is a repair program for $c$ with interfaces $C$.
\end{construction}

{\begin{example}
1. For the constraint $d=\PE(\onenode{1},\NE\twonodesedgeredge{1}{}{}{})$ meaning there exists a node without 2-cycle, i.e., two real edges in opposite direction, $P_d = \tuple{\try \R_a; \tuple{\select(a); \Rdown{\S'_b}; \unselect(a)}}$ where $a\colon\emptyset\injto \onenode{1}$, $b\colon \onenode{1}\injto\twonodesedgeredge{1}{}{}{}$, and $\S_b$ is the repairing set from Example~\ref{ex:Ra}.\ignore{and $\S_b=\{\twonodesedgeredge{1}{2}{}{}\dder\twonodesedge{1}{2}{}\}$.} The program checks whether there exists a node, and if not, it creates one. It selects a node and, if there are two edges in opposite directions, it deletes one. The check of existence is done one time, the deletion as long as possible.

2. For the constraint $d=\PA(\onenode{1},\PE\twonodesedge{1}{}{})$, meaning that, for every node, there exists a real outgoing edge, $P_d = \tuple{\select(a,\neg c);P_{c};\unselect(a)}\downarrow$, is a repair program for $d$, where $a\colon\emptyset\injto \onenode{1}$, $P_c = \try \R_a$ is the repair program for $c=\PE (\onenode{1}\injto\twonodesedge{1}{}{})$, and $\R_a$ is the repairing set from Example~\ref{ex:Ra}.
The repair program selects a node without a real outgoing edge, e.g. the third node from left (see below), applies the rule, and unselects the selected part. Afterwards all nodes possess a real outgoing edge. 

\[\scalebox{1}{
\tikz[node distance=2.5em,inner sep=4pt,outer sep=2pt]{
\node(1)[draw]{$
\tikz[node distance=1em,inner sep=0pt]{\node(space){};
\node(P)[label={[marking]below: \scriptsize },node distance=0.2em,strictly below of=space]{$\circ$};
\node(C)[label={[marking]below: \scriptsize},strictly right of=P]{$\circ$};
\node(D)[label={[marking]below: \scriptsize },strictly right of=C]{$\circ$};
\draw[morphism] (P) edge node[overlay,above]{} (C);
\draw[morphism] (C) -- (D);
}
$};
\node(2)[strictly right of=1,draw,node distance=5em]{$
\tikz[node distance=1em,inner sep=0pt]{\node(space){};
\node(P)[label={[marking]below: \scriptsize },node distance=0.2em,strictly below of=space]{$\circ$};
\node(C)[label={[marking]below: \scriptsize},strictly right of=P]{$\circ$};
\node(D)[label={[marking]below: \scriptsize },strictly right of=C]{${\bullet}$};
\draw[morphism] (P) edge node[overlay,above]{} (C);
\draw[morphism] (C) -- (D);
}
$};
\node(3)[strictly right of=2,draw]{$
\tikz[node distance=1em,inner sep=0pt]{\node(space){};
\node(P)[label={[marking]below: \scriptsize },node distance=0.2em,strictly below of=space]{$\circ$};
\node(C)[label={[marking]below: \scriptsize},strictly right of=P]{$\circ$};
\node(D)[label={[marking]below: \scriptsize },strictly right of=C]{${\bullet}$};
\draw[morphism] (P) edge node[overlay,above]{} (C);
\draw[morphism] (C) edge (D);
\draw[morphism] (D) edge[bend left] node[overlay,above]{} (P);
}
$};
\node(4)[strictly right of=3,draw]{$
\tikz[node distance=1em,inner sep=0pt]{\node(space){};
\node(P)[label={[marking]below: \scriptsize },node distance=0.2em,strictly below of=space]{$\circ$};
\node(C)[label={[marking]below: \scriptsize},strictly right of=P]{$\circ$};
\node(D)[label={[marking]below: \scriptsize },strictly right of=C]{$\circ$};
\draw[morphism] (P) edge node[overlay,above]{} (C);
\draw[morphism] (C) edge (D);
\draw[morphism] (D) edge[bend left] node[overlay,above]{} (P);
}
$};
\draw[derivation] (1) to node[overlay,above]{\footnotesize$\select(a,\neg c)$} (2);
\draw[derivation] (2) to node[overlay,above]{\footnotesize$\R_a$} (3);
\draw[derivation] (3) to node[overlay,above]{\footnotesize$\unselect(a)$} (4);
}}\]
\end{example}}

\begin{proof}[of Theorem \ref{thm:repair}]{\bf  By induction on the structure of the condition.}
Let $d$ be a proper condition and~$P_d$ the program in Construction~\ref{const:proper}. 
(1)  Let $d = \ctrue$. For all triples $\tuple{g,h,i}\in\psem{\Skip}$, $h\circ i\models\ctrue$, i.e., $\Skip$ is a repair program for $\ctrue$.
For (2) and (3) see Lemma \ref{lem:basic}. 

(4) Let $\tuple{g,h,i}\in\psem{P_{\PE a};\tuple{\select(a);P_c;\unselect(a)}}$ (see Figure \ref{fig:illust1}, left). We show that $h\models\PE(a,c)$, i.e., there is some injective morphism $q\colon C\injto H$ such that $p=q\circ a$ and $q\models c$. Let 
 $\tuple{g,h_1,i_1}\in\psem{P_{\PE a}}$ with  $h_1\circ i_1\models \PE a$, 
$\tuple{h_1,h_2,a}\in\psem{\select(a)}$ with $h_1= h_2\circ a$,
$\tuple{h_2,h_3,i_2}\in\psem{P_c}$ with $h_3\circ i_2\models c$,
 $\tuple{h_3,h,a^{-1}}\in\psem{\unselect(a)}$ with $h=h_3\circ a$.
Choose $q=h_3$. Then $h=h_3\circ a= q\circ a$ and, since $P_d$ is $A$-preserving, $q=h_3=h_3\circ i_2\models c$, i.e.,  $h\models\PE(a,c)$. (For $A$-preserving  programs, the interface relation $i_2$ total. Without loss of generality, it is an inclusion.)

(5) Let $\tuple{g,h,i}\in\psem{\Rdown{\tuple{\select(a,\neg c);P_c;\unselect(a)}}}$. We show that $h\models\PA(a,c)=\neg\PE(a,\neg c)$. (see Figure \ref{fig:illust1}, right.) By the semantics of~$\downarrow$, the program $\tuple{\select(a,\neg c);P_c;\unselect(a)}$ is not applicable to the domain of $h_1$. Then there is an injective morphism $h$ such that $h=h_1\circ a$ and $h\models\neg c$, i.e., $h\models\neg\PE(a,\neg c)=\PA(a,c)$. 
\vspace{-1em}
\begin{figure}[h!]
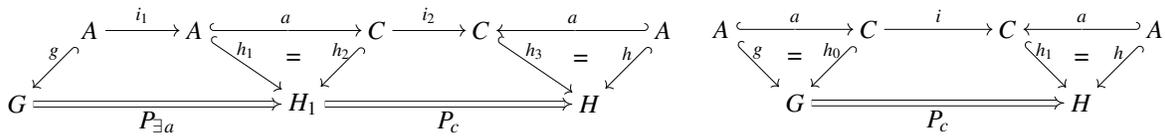

\[\scalebox{0.9}{
\tikz[node distance=4em,shape=rectangle,outer sep=1pt,inner sep=2pt,label distance=-1.25em]{
\node(G){$G$};
\node(A1)[above right of=G]{$A$};
\node(A2)[right of=A1]{$A$};
\node(H1)[node distance=11em, right of=G]{$H_1$};
\node(C1)[above right of=H1]{$C$};
\node(C2)[right of=C1]{$C$};
\node(H)[node distance=11em, right of=H1]{$H$};
\node(A3)[above right of=H]{$A$};
\draw[morphism] (A1) -- node[above](name1){\scriptsize $i_1$} (A2);
\draw[monomorphism] (A2) -- node[above](aname){\scriptsize {$a$}} (C1);
\draw[morphism] (C1) -- node[above]{\scriptsize $i_2$} (C2);
\draw[derivation] (G) -- node(pa)[below]{$P_{\PE a}$} (H1);
\draw[monomorphism] (A1) -- node[above]{\scriptsize $g$} (G);
\draw[monomorphism] (A2) -- node[above]{\scriptsize $h_1$} (H1);
\draw[monomorphism] (C1) -- node[above]{\scriptsize $h_2$} (H1);
\draw[monomorphism] (C2) -- node[above]{\scriptsize $h_3$} (H);
\draw[altmonomorphism] (A3) -- node[above](a2){\scriptsize $a$}(C2);
\draw[monomorphism] (A3) -- node[above](h){\scriptsize $h$} (H);
\draw[derivation] (H1) -- node[below]{$P_c$}(H);
\draw[draw=white] (aname) -- node[overlay]{=} (H1);
\draw[draw=white] (a2) -- node[overlay](tr1){=} (H);}
\quad
\tikz[node distance=4em,shape=rectangle,outer sep=1pt,inner sep=2pt,label distance=-1.25em]{
\node(G){$G$};
\node(A1)[above left of=G]{$A$};
\node(A2)[above right of=G]{$C$};
\node(H)[node distance=11em, right of=G]{$H$};
\node(A3)[above left of=H]{$C$};
\node(A4)[above right of=H]{$A$};
\draw[monomorphism] (A1) -- node[above](i1){\scriptsize $a$} (A2);
\draw[morphism] (A2) -- node[above]{\scriptsize {$i$}} (A3);
\draw[altmonomorphism] (A4) -- node[above](i3){\scriptsize $a$} (A3);
\draw[derivation] (G) -- node[below]{$P_c$} (H);
\draw[monomorphism] (A1) -- node[above](a){\scriptsize $g$} (G);
\draw[monomorphism] (A2) -- node[above](a){\scriptsize $h_0$} (G);
\draw[monomorphism] (A3) -- node[above](a){\scriptsize $h_1$} (H);
\draw[monomorphism] (A4) -- node[above](a){\scriptsize $h$} (H);
\draw[draw=white] (i1) -- node[overlay]{=} (G);
\draw[draw=white] (i3) -- node[overlay](tr1){=} (H);
}}\]
\caption{\label{fig:illust1} Illustration of the proof}
\end{figure}
\end{proof}
In the following, we look for properties of the constructed repair programs. Whenever a condition graph requires the non-existence (existence) of a certain subgraph, there is no non-deleting (non-adding) repair program that repairs all graphs. Therefore, we look for minimally deleting or maximally preserving repair programs.
\ignore{To get some ranking between the possible resulting graphs, we consider \emph{maximally preserving} repair programs, i.e. where items are preserved whenever possible. Whenever a graph satisfies a constraint, but the negation is required, then the graph cannot be repaired without deletions. In this case, at least as necessary items should be deleted. }

\begin{definition}[maximally preserving repair]\label{def:mpres}
A repair program $P_d$ for a proper condition $d$ is \emph{maximally preserving}, if, for all transformations $t\colon G\dder_{{P_d},g,h}H$, \[\pres(P_d,t)\geq\size(G)-\Delta(g,d)\] 
where, for a transformation $t$ via $P_d$, $\pres(P_d,t)$ denotes the number of preserved items by $t$, i.e., the items in the domain of the partial track morphisms of $t$, and $\Delta(g,d)$ denotes the maximum number of necessary deletions. 
\footnotetext{For a set $S$, $|S|$ denotes the number of elements.}
Given an injective morphism $g\colon A\injto G$ and a proper condition $d$, $\Delta(g,d)$ is
defined inductively as follows: $\Delta(g,\ctrue) = 0$, $\Delta(g,\PE a)=~0$,  
\[\begin{array}{llclllll}
(1)&\Delta(g,\NE a)&=&\sum_{g'\in\Ext(g)} (1+\dang(g'))\\
(2)&\Delta(g,\PE(a,c))&=&\max_{g' \in\Ext(g)} \Delta(g',c)\\
(3)&\Delta(g,\PA(a,c))&=& \sum_{g'\in\Ext(g)}(\Delta(g',c))\\
\end{array}\]
where \ignore{$\Mor(A,G)$ denotes the set of all injective morphisms $g\colon A\injto G$,} $\Ext(g)=\{g'\colon C\to G\mid g' \circ a=g\}$, and $\dang(g')$ denotes the maximum number of dangling edges at $g'$, i.e. $\dang(g')=0$ if there is some edge in $g'(C-A)$ and $\max_{v \in (C-A)}\inc(v)$, otherwise, where, for a node $v$, $\inc(v)$ denotes the number of edges incident to $v$.
\end{definition}

\begin{remark}Given a morphism $g\colon A\injto G$ and a proper condition $d$, we determine the maximal number of necessary deletions $\Delta(g,d)$. This is zero if the condition is $\ctrue$ or of the form $\PE a$. For a condition of the form $\NE a$, we consider all morphisms $g'\in\Ext(g)$, and sum up the number of deletions.
For a condition of the form $\PE(a,c)$, we consider all morphisms $g'\in\Ext(g)$ and build the maximum of all $\Delta(g',c)$. For a condition of the form $\PA(a,c)$, we consider all morphisms $g'\in\Ext(g)$ and sum up the  number of necessary deletions for the condition $c$ at that position, i.e. $\Delta(g',c)$.
\end{remark}

\begin{fact}Repair program based on (1), in general, are neither terminating nor maximally preserving.\end{fact}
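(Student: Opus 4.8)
The claim is that repair programs built using Construction (1) (the naive single-rule repairing sets) are in general neither terminating nor maximally preserving. The plan is to exhibit two small counterexamples — one showing non-termination, one showing a failure of maximal preservation — using the basic conditions handled by Construction~\ref{const:basic}(1).

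\medskip
\textbf{Non-termination.} First I would look at a constraint of the form $\PA(x, \NE a)$ (equivalently the proper condition $\NE$ nested under a universal, which by Construction~\ref{const:proper}(5) yields a $\downarrow$-loop around $\select(x,\neg c);\Rdown{\S'_a};\unselect(x)$), or more simply at $\NE a$ itself repaired by $\Rdown{\S'_a}$ where $\S_a=\{\tuple{a,C\dder A,\id_A}\}$ from Construction~(1). The point is that the single decreasing rule $C\dder A$ together with its dangling-edges variant can interact badly: deleting one occurrence of $C$ can create a new occurrence of $C$ (because deleting a node or edge can merge or expose structure matching $a$ again), so the $\downarrow$-iteration need not terminate — or, in the $\PA$ case, the outer loop keeps finding fresh matches. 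A clean instance is $\NoTwo$ from Example~\ref{ex:railroad}: the Construction~(1) rule deletes a whole two-train-on-one-track pattern at once (both trains and the track), but on a graph with a long chain of tracks each carrying two trains, the dangling-edges operator forces deletion of incident edges, and re-matching after each step can loop. I would draw the offending graph and the cycle of transformations explicitly.

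\medskip
\textbf{Failure of maximal preservation.} Second I would reuse essentially the same picture to violate the inequality $\pres(P_d,t)\ge \size(G)-\Delta(g,d)$. Take $d=\NE(\,\emptyset\injto C\,)$ with $C$ containing, say, two nodes and an edge, and let $\S_a=\{\tuple{a,C\dder\emptyset,\id}\}$ from Construction~(1): the rule deletes \emph{the entire} subgraph $C$ (plus dangling edges) on every application. Choose $G$ to be a graph where a single edge-deletion would already destroy all occurrences of $C$, so $\Delta(g,d)$ is small (it counts roughly $1+\dang$ per extension, but the extensions overlap so the ``maximum number of necessary deletions'' is far below $\size(G)$), whereas $\Rdown{\S'_a}$ strips out many nodes. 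Then $\pres(P_d,t)$ is strictly less than $\size(G)-\Delta(g,d)$, so $P_d$ is not maximally preserving. I would pick the smallest $G$ for which this gap is visible — e.g.\ a ``star'' of several edges to a common node, where deleting the center is one deletion but the Construction~(1) rule also removes a rim node each time, or a path $\circ\to\circ\to\circ$ for $C=\twonodesedge{1}{}{}$.

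\medskip
The main obstacle is \emph{bookkeeping}, not ideas: I must make sure the chosen $G$ and $a$ genuinely defeat \emph{both} properties with the precise definitions in play — in particular computing $\Delta(g,d)$ correctly via Definition~\ref{def:mpres} (the sum over $\Ext(g)$ with the $1+\dang(g')$ weights, remembering that overlapping occurrences are all summed, which actually \emph{inflates} $\Delta$, so I need the inflation still to be smaller than what Construction~(1) deletes), and making sure the dangling-edges operator $\S'_a$ does what I claim step by step. I would state the counterexamples, display the graphs in TikZ, trace one transformation sequence for each, and note that since Construction~(1) is a legitimate choice of repairing set in Construction~\ref{const:proper}, these examples show the quantified ``in general'' claim; the contrast with Construction~(2), which the paper asserts does yield terminating and maximally preserving programs, is what motivates the rest of the development.
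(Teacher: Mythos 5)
Your second half (failure of maximal preservation) is essentially the paper's own argument and is fine: the Construction~(1) rule $C\dder A$ deletes all $|C-A|$ items (plus dangling edges) at once, whereas $\Delta(g,\NE a)$ charges only $1+\dang(g')$ per extension, so on any graph with a single occurrence of $C$ strictly larger than one item the inequality $\pres(P_d,t)\geq\size(G)-\Delta(g,d)$ fails; your path example for $C=\twonodesedge{1}{}{}$ works.

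The non-termination half, however, rests on a mechanism that does not exist in this setting, and this is a genuine gap. You try to get non-termination out of the \emph{deleting} side ($\NE a$, $\Rdown{\S'_a}$, your $\NoTwo$ chain): you claim that deleting an occurrence of $C$ ``can merge or expose structure matching $a$ again.'' In the DPO/SPO setting of the paper nothing is merged by deletion, matches are injective morphisms, and any injective morphism into the smaller graph already existed in the larger one; moreover every application of $\S'_a$ (with or without the dangling-edges operator) strictly decreases $\size(G)$, so both $\Rdown{\S'_a}$ and the outer $\PA$-loop around it terminate on every finite graph. The source of non-termination in Construction~(1) is the \emph{increasing} rule $A\dder C$ for $\PE a$ sitting inside a universal loop, which your proposal never examines: for $d=\PA(\onenode{1},\PE\twonodesedge{1}{}{})$ the program $\tuple{\onenode{1}\dder\twonodesedge{1}{}{}}\downarrow$ repairs a node by attaching an edge to a \emph{fresh} node, which itself violates the condition, so the loop keeps finding new violating matches forever (and adding the application condition $\NE\twonodesedge{1}{}{}$ does not help, since the fresh node has no outgoing edge). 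This is exactly the paper's counterexample; without it, or some other example exploiting rule applications that create new violations, your proof establishes only the maximal-preservation half of the fact.
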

\begin{proof}For the condition $\PA(\onenode{1},\PE\twonodesedge{1}{}{})$, $\tuple{\onenode{1}\dder\twonodesedge{1}{}{}}\downarrow$ is a repair program based on (1) not creating cycles. The same holds for $\tuple{\onenode{1}\dder\twonodesedge{1}{}{},\NE\twonodesedge{1}{}{}}\downarrow$. Both programs are not terminating. A~rule $C\dder A$ deletes $[C-A]$ items, although only one item has to be deleted, i.e., in general it is not maximally preserving. \end{proof}

\begin{lemma}[program properties]\label{fac:mpres}
The repair program based on Construction \ref{const:basic}(2) is terminating and maximally preserving.
\end{lemma}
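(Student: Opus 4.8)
The plan is to prove both properties simultaneously by induction on the structure of the proper condition $d$, following the cases of Construction~\ref{const:proper} and unfolding Construction~\ref{const:basic}(2) in the basic cases. For termination, observe that $\Skip$, $\try\R_a$, $\select$ and $\unselect$ each denote at most one direct transformation and that sequential composition preserves termination, so only the two as-long-as-possible iterations require an argument. For $d=\NE a$, every rule of $\S_a'$ strictly decreases $\size$ of the host graph --- it removes exactly one edge of $C-B$, or one node of $C-A$ together with its finitely many incident edges --- and adds nothing, so $\Rdown{\S_a'}$ halts in at most $\size(G)$ steps. For maximal preservation I track, case by case, the set of items of the original $G$ that the program deletes and compare it with the recursive definition of $\Delta(g,d)$ from Definition~\ref{def:mpres}.

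In the cases $d=\ctrue$ and $d=\PE a$ nothing is deleted and $\Delta(g,d)=0$, so $\pres(P_d,t)=\size(G)$; here I use that the rules of $\R_a$ in Construction~\ref{const:basic}(2) are increasing, hence $\try\R_a$ only adds. For $d=\NE a$: since the left-hand side $C$ of every rule of $\S_a$ contains the interface $A$, each iteration of $\Rdown{\S_a'}$ matches some $g'\in\Ext(g)$, destroys at least that occurrence, and deletes at most $1+\dang(g')$ original items --- the edge of $C-B$ when $C-A$ contains an edge (so $\dang(g')=0$ and no dangling edge arises), and otherwise a node of $C-A$ with its at most $\dang(g')=\max_{v\in C-A}\inc(v)$ incident edges. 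Since deleting items never creates new $C$-occurrences extending $g$, at most $|\Ext(g)|$ iterations occur, so the number of deleted $G$-items is at most $\sum_{g'\in\Ext(g)}(1+\dang(g'))=\Delta(g,\NE a)$. The cases $d=\PE(a,c)$ and $d=\PA(a,c)$ reduce to the induction hypothesis for $P_c$: for $\PE(a,c)$, the prefix $\try\R_a$ only adds, $\select$ and $\unselect$ change nothing, and $P_c$ runs once at a single extension $g'$ of $g$, deleting at most $\Delta(g',c)\le\max_{g'\in\Ext(g)}\Delta(g',c)=\Delta(g,\PE(a,c))$; for $\PA(a,c)$, the outer loop runs $P_c$ once at each extension of $g$ at which $c$ fails, and the deletions of these repairs sum to at most $\sum_{g'\in\Ext(g)}\Delta(g',c)=\Delta(g,\PA(a,c))$.

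The \emph{main obstacle} is termination of the iteration for $d=\PA(a,c)$: a priori, repairing $c$ at one bad occurrence could both falsify $c$ at other occurrences (when $P_c$ deletes) and create fresh occurrences of $C$ (when $P_c$ adds), so the count of bad occurrences need not decrease naively. The lever is the application condition $\ac\wedge\ac_B$ carried by the rules of $\R_a$ in Construction~\ref{const:basic}(2), with $\ac=\Shift(A\injto B,\NE a)$ and $\ac_B=\bigwedge_{B\subset B'\subseteq C}\NE B'$: the first conjunct prevents a rule from firing once the subcondition it repairs already holds locally, and the second forces the rule with the largest applicable left-hand side to act, so each repair adds the fewest possible items and cannot be undone within the same repair. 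I would make this precise by exhibiting a well-founded measure on reachable host graphs --- an a priori bound, depending only on $\size(G)$ and the fixed finite $d$, on the size of every graph occurring during the run, together with a lexicographic component counting the still-unsatisfied nested occurrences weighted by quantifier depth --- and checking that each iteration of the outer loop strictly decreases it. This is where the real work lies; it is essentially the termination argument of \cite[Thm 1]{Habel-Sandmann18a} transported to the present setting, after which maximal preservation of the $\PA$-case follows from the additivity of $\Delta$ over $\Ext(g)$ used above.
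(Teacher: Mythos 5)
Your proposal follows essentially the same route as the paper: maximal preservation is shown by induction on the structure of the proper condition, comparing the deleted items in each case against the recursive definition of $\Delta$, with the basic cases resting on the increasing/decreasing rules of Construction~\ref{const:basic}(2). For termination the paper gives no new argument either but simply cites \cite{Habel-Sandmann18a}, which is exactly where you defer the well-founded-measure argument for the $\PA$-iteration, so your treatment matches the paper's proof.
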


\begin{proof}The termination of the repair program based on (2)  is shown in \cite{Habel-Sandmann18a}. The maximal preservation of the repair program $P_d$ based on (2) is shown by induction of the length of transformations: We show that, for all transformations $t\colon G\dder_{P_d,g,h}~H$, 
\[\pres(P_d,t)\geq\size(G)-\Delta(g,d).\]

Let $d$ be a proper condition, $P_d$ the repair program for $d$, and $t\colon G\dder_{P_d,g,h}~H$ a transformation. 
\begin{enumerate}
\item[(1)] For $d = \ctrue$, $P_d=\Skip$, and $\pres(\Skip,t)=\size(G)$.
\item[(2)] For $d = \PE a$, $P_d=\try \R_a$, and $\pres(\try \R_a,t)=\size(G)$.

\item[(3)] For $d = \NE a$. $P_d=\Rdown{\S'_a}$. (a) If $g\models d$, then $\pres(\Rdown{\S'_a},t)=\size(G)$. (b) If $g\not\models d$, then the transformation $G\dder_{\S'_ a\downarrow,g,h}^{+}H$ is of the form $G\dder_{\S'_a,g,g_1} G_1\dder_{\S'_ a\downarrow,g_1,h}H$ where $t_1$ denotes the transformation starting with $G_1$. Then, for all $g'\in\Ext(g)$,  (*) $\size(G_1)=\size(G)-\Delta(g',d)$ with $\Delta(g',d)=1+\dang(g')$. By definition of $\Delta$, 
(**) $\Delta(g,d)=\Delta(g',d)+\Delta(g_1,d)$.

\[\begin{array}{lcll}
\pres(P_d,t)
&\geq & \pres(P_d,t_1)\\
&\geq&\size(G_1)-\Delta(g_1,d)&\mbox{(induction hypothesis)}\\
&=&\size(G)-\Delta(g,d)&\mbox{((*), (**))}\\
\end{array}\]

\item[(4)] For $d=\PE(a,c)$, $P_d = P_{\PE a};\tuple{\select(a);P_c;\unselect(a)}$.
(a) If $g\models d$, then $\Pres(P_d,G)=\size(G)$. (b) If $g\not\models d$, then  $G\dder_{P_d,g.h}H$ is of the form  $G\dder_{P_{\PE a},g,g'} G_1\dder_{P'_c,g',h}H$. Then $\size(G_1)\geq\size(G)$ and, for all $g'\in\Ext(g)$,  (**)~$\Delta(g,d)=\Delta(g',c)$. 

\[\begin{array}{lcll}
\pres(P_d,t)
&=&\pres(P'_c,t_1)&\mbox{}\\
&\geq&\size(G_1)-\Delta(g',c)&\mbox{(induction hypothesis)}\\
&=&\size(G)-\Delta(g,d)&\mbox{((*), (**))}\\
\end{array}\]

\item[(5)] For $d = \PA(a,c)$, $P_d = \Rdown{\tuple{\select(a,\neg c);P_c;\unselect(a)}}$. (a) If $g\models d$, then $\Pres(P_d,G)=\size(G)$. (b)~If $g\not\models d$, then $G\dder_{P_d,g,h}H$ is of the form  $G\dder_{P'_c,g,g_1} G_1\dder_{P_d,g_1,h}H$ where $P'_c$ denotes the program without iteration. 
If $c$ is of the form $\PE(a',c')$, then, $\size(G_1)\geq\size(G)$ and, for every $g'\in\Ext(g)$, $\Delta(g',c)=~0$. 
If $c$ is of the form $\NE a'$, then, for every $g'\in\Ext(g)$, $\size(G_1)=\size(G)-\Delta(g',c)$ as in Case (3). 
Thus,  (*)~$\size(G_1)\geq\size(G)-\Delta(g',c)$. By definition of $\Delta$,  
(**)~$\Delta(g,d)=\Delta(g_1,d)+\Delta(g',c)$.

\[\begin{array}{lcll}
\pres(P_d,t)
&\geq& \pres(P_d,t_1)\\
&\geq&\size(G_1)-\Delta(g_1,d)&\mbox{(induction hypothesis)}\\
&\geq&\size(G)-\Delta(g,d) &\mbox{((*), (**))}\\
\end{array}\]
\end{enumerate}
This completes the inductive proof.
\end{proof}

\section{Rule-based  repair}\label{sec:rb-repair}

A rule-based program is a program based on a set of rules equipped with the dangling-edges operator, context, application condition, and interface.
\begin{definition}[rule-based programs]\label{def:rule-based}Given a set of rules $\R$, a program is \emph{$\R$-based}, if all rules in the program are rules in $\R$ equipped with dangling-edges operator, context, application condition, and interface. Additionally, the empty program $\Skip$ is $\R$-based. 
\end{definition}
\vspace{1em}
\begin{example}\label{ex:rb-repair}The rule
$\build = \brule{\;\embedtikz{
\node[waypoint,label={[below, yshift=-0.2cm]\tiny 1}] at (1,-2) (v1) {};
\node[waypoint,label={[below, yshift=-0.2cm]\tiny 2}] at (2,-2) (v2) {};
}}
{\;\embedtikz{
\node[waypoint,label={[below, yshift=-0.2cm]\tiny 1}] at (1,-2) (v1) {};
\node[waypoint,label={[below, yshift=-0.2cm]\tiny 2}] at (2,-2) (v2) {};
}}
{\;\embedtikz{
\node[waypoint,label={[below, yshift=-0.2cm]\tiny 1}] at (1,-2) (v1) {};
\node[waypoint,label={[below, yshift=-0.2cm]\tiny 2}] at (2,-2) (v2) {};
\draw[trackgray] (v1) edge (v2);\draw[trackwhite] (v1) edge (v2);
}\;\;}$\vspace{0.1cm}
equipped with the context\\ $\twowaypointsindex{1}{2} \injto \onelokindex{1}{2}$ and 
the application condition $\NE \lokontrackindex{1}{2} \wedge \PE \onelokindex{1}{2}$
yields to the $\{\build \}$-based program $ \try \build2$ with  
$\build2 = \tuple{\onelokindex{1}{2} \injlto \onelokindex{1}{2} \injto \lokontrackindex{1}{2},\NE \lokontrackindex{1}{2} \wedge \PE \onelokindex{1}{2}}$.
\end{example}

The construction of an $\R$-based repair program for a proper condition $d$ is based on the following idea (see Figure~\ref{fig:based}).
\begin{itemize}
\item[(1)] Take a repair program for the condition $d$ (Theorem \ref{thm:repair}).
\item[(2)] Try to refine the rules of the repairing sets by equivalent transformations via $\R$.
\item[(3)] Transform the transformations into equivalent $\R$-based programs (Theorem \ref{thm:trafo}).
\item[(4)] Replace each repairing set in $P_d$ by the equivalent $\R$-based program (Theorem \ref{thm:rb-repair}).
\end{itemize}

\begin{figure}[h!]
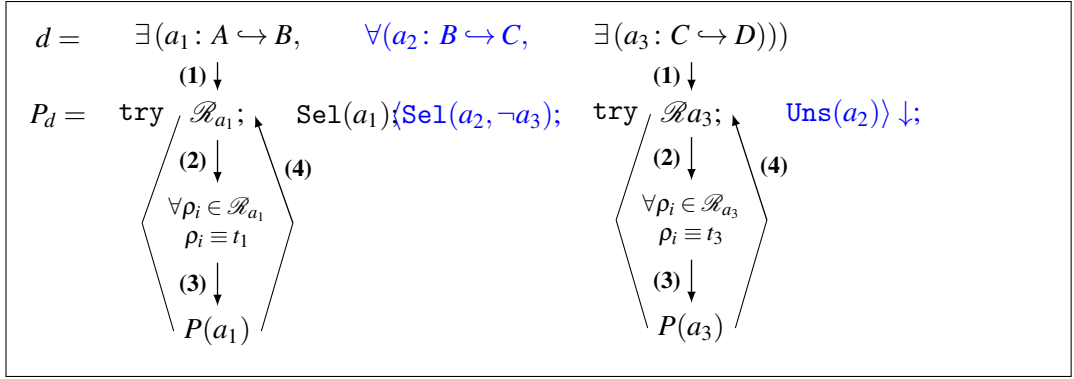

\[\scalebox{1}{
\tikz[node distance=1.5em,inner sep=3pt,outer sep=1pt]{
\node(a1) {$\PE (a_1\colon A \injto B, $};
\node(a2)[strictly right of=a1] {\blue $\PA (a_2\colon B \injto C,$};
\node(a3)[strictly right of=a2] {$\PE (a_3\colon C \injto D)))$};
\node(d)[left of=a1,node distance=5.5em]{$d = $};

\node(p1) [node distance=1em,strictly below of=a1] {$\R_{a_1};$};
\node(tryp1) [left of=p1,node distance=2.5em]{$\try$};
\node(seln) [right of=p1,node distance=4.5em] {$\select(a_1);$};
\node(p2) [node distance=1em,xshift=1em,strictly below of=a2] {\blue $\langle \select(a_2,\neg a_3);$};
\node(p3) [node distance=1em,strictly below of=a3] {$\R{a_3};$};
\node(tryp3) [left of=p3,node distance=2.5em]{$\try$};
\node(uns) [strictly right of=p3] {\blue $\unselect(a_2)\rangle\downarrow;$};
\node(Pd)[left of=p1,node distance=5.5em]{$P_d = $}; 

\node(t1) [strictly below of=p1] {\footnotesize \begin{tabular}{c}$\PA \prule_i \in \R_{a_1}$ \\ $\prule_i \equiv t_1$\end{tabular}};
\node(t2) [strictly below of=p2] {\phantom{$A \DSdder_{\R}^{*} C$}};

\node(t3) [strictly below of=p3] {\footnotesize \begin{tabular}{c}$\PA \prule_i \in \R_{a_3}$ \\ $\prule_i \equiv t_{3}$\end{tabular}};

\node(pa1) [strictly below of=t1] {$P(a_1)$};
\node(pa2) [strictly below of=t2] {\phantom{$P(a_2)$}};
\node(pa3) [strictly below of=t3] {$P(a_3)$};


\draw[arrow] (t1) to node [left] {\footnotesize \textbf{(3)}} (pa1);
\draw[arrow] (t3) to node [left] {\footnotesize  \textbf{(3)}} (pa3);
\draw[arrow] (p1) to node [left] {\footnotesize \textbf{(2)}} (t1);
\draw[arrow] (p3) to node [left] {\footnotesize \textbf{(2)}} (t3);

\draw[arrow] (a1) to node [left] {\footnotesize \textbf{(1)}} (p1);
\draw[arrow] (a3) to node [left] {\footnotesize \textbf{(1)}} (p3);

\draw[-] (t1.west) to (p1.west);
\draw[arrow] (t1.east) to node[right] {\footnotesize \textbf{(4)}} (p1.east);
\draw[-] (t3.west) to (p3.west);
\draw[arrow] (t3.east) to node[right] {\footnotesize \textbf{(4)}} (p3.east);

\draw[-] (pa1.west) to (t1.west);
\draw[-] (pa1.east) to node[right]{} (t1.east);
\draw[-] (pa3.west) to (t3.west);
\draw[-] (pa3.east) to node[right]{} (t3.east);


\node [strictly left of=d,node distance=0.3em](h2){};
\node [strictly above of=h2,node distance=0.5em](ol){};
\node [strictly right of=a3,node distance=9em] (h1){};
\node [strictly below of=h1,node distance=11em](ur){};
\draw (ur) rectangle (ol);
}}\]
\caption{\label{fig:based}Construction of an $\R$-based repair program}
\end{figure}

In the following, we introduce the main notion of compatibility, saying that, for all rules of the repairing sets of the repair program for $d$, there are equivalent transformations via the rule set.

\begin{definition}[equivalence]
Two transformations $t,t'$ from $G$ to $H$ are \emph{equivalent}, denoted $t \equiv t'$, if for each extension form $G^*$ to $H^*$ there is an extension of $t'$ from $G^*$ to $H^*$, and vice versa.
\end{definition}

\ignore{
\begin{remark}
Instead of equivalent transformations, one may consider approximating transformations:
Two transformations $t,t'$ starting from the same graph are
\emph{replaceable} if for each extension of $t$, there is one for $t'$,
and vice versa. A transformation $t'$ approximates $t$ w.r.t.
$\ac$, denoted by $t\leq_{\ac}t'$, if the transformations $t,t'$ are
replaceable and, for all the triples $\tuple{g_1,h_1,i_1}\in\psem{t_1}$ and
$\tuple{g_2,h_2,i_2}\in\psem{t_2}$, $h_1\circ i_1\models\ac\shortiff h_2\circ
i_2\models \ac$.
\end{remark}
}

\begin{definition}[compatibility]\label{def:comp}Let $d$ be a proper condition. A set of rules $\R$ is \emph{$d$-compatible} (w.r.t. a repair program $P_d$) if, for all rules in the repairing sets of $P_d$, there are {\emph equivalent} transformations via $\R$. In particular, if $\R=\{\prule\}$, we also say that $\prule$ is $d$-compatible.
\end{definition}

\begin{example}
Let $\NoTwo=\NE(\emptyset\injto\scalebox{0.7}{\embedtikz{
\node[waypoint,label={[below left, yshift=-0.2cm]\tiny}] at (1,-2) (v1) {};
\node[waypoint,label={[below right, yshift=-0.2cm]\tiny}] at (2,-2) (v2) {};
\draw[trackgray] (v1) edge (v2);\draw[trackwhite] (v1) edge (v2);
\draw[arrow] (v1) edge[bend angle=90, bend left,min distance=1em] node {\nlok{}} (v2);
\draw[arrow] (v1) edge[bend angle=90, bend right,min distance=1em] node {\nlok{}} (v2);
}})$.  
Then $\{\Delete\}$ is a repairing set for $\NoTwo$, and there is a transformation
\scalebox{0.7}{$\embedtikz{
\node[waypoint,label={[below left, yshift=-0.2cm]\tiny}] at (1,-2) (v1) {};
\node[waypoint,label={[below right, yshift=-0.2cm]\tiny}] at (2,-2) (v2) {};
\draw[trackgray] (v1) edge (v2);\draw[trackwhite] (v1) edge (v2);
\draw[arrow] (v1) edge[bend angle=90, bend left,min distance=1em] node {\nlok{}} (v2);
\draw[arrow] (v1) edge[bend angle=90, bend right,min distance=1em] node {\nlok{}} (v2);
}\DSdder_{\Delete2}
\embedtikz{
\node[waypoint,label={[below left, yshift=-0.2cm]\tiny}] at (1,-2) (v1) {};
\node[waypoint,label={[below right, yshift=-0.2cm]\tiny}] at (2,-2) (v2) {};
\draw[trackgray] (v1) edge (v2);\draw[trackwhite] (v1) edge (v2);
\draw[arrow] (v1) edge[bend angle=90, bend right,min distance=1em] node {\nlok{}} (v2);
}$} 
via the $\Delete$-based program  
$\Delete2:\quad$\scalebox{0.7}{$\acrule{\embedtikz{
\node[waypoint,label={[below, yshift=-0.2cm]\tiny 1}] at (1,-2) (v1) {};
\node[waypoint,label={[below, yshift=-0.2cm]\tiny 2}] at (2,-2) (v2) {};
\draw[trackgray] (v1) edge (v2);\draw[trackwhite] (v1) edge (v2);
               \draw[arrow] (v1) edge[bend angle=90, bend left,min distance=1em] node {\nlok{}} (v2);
}}
{\embedtikz{
\node[waypoint,label={[below, yshift=-0.2cm]\tiny 1}] at (1,-2) (v1) {};
\node[waypoint,label={[below, yshift=-0.2cm]\tiny 2}] at (2,-2) (v2) {};
\draw[trackgray] (v1) edge (v2);\draw[trackwhite] (v1) edge (v2);
                }}
                {\PE\embedtikz{
\node[waypoint,label={[below, yshift=-0.2cm]\tiny}] at (1,-2) (v1) {};
\node[waypoint,label={[below, yshift=-0.2cm]\tiny}] at (2,-2) (v2) {};
\draw[trackgray] (v1) edge (v2);\draw[trackwhite] (v1) edge (v2);
               \draw[arrow] (v1) edge[bend angle=90, bend left,min distance=1em] node {\nlok{}} (v2);
                \draw[arrow] (v1) edge[bend angle=90, bend right,min distance=1em] node {\nlok{}} (v2);
}\;}$}$\downarrow$.
The rule set $\{\Move\}$ is a repairing set for $\NE (\emptyset \injto \scalebox{0.7}{\tikz[]{
\node[waypoint,label={[below, yshift=-0.2cm]\tiny }] at (1,-2) (v1) {};
\node[waypoint,label={[below, yshift=-0.2cm]\tiny }] at (2,-2) (v2) {};
\node[waypoint,label={[below, yshift=-0.2cm]\tiny }] at (3,-2) (v3) {};
\draw[trackgray] (v1) edge (v2);\draw[trackwhite] (v1) edge (v2);
\draw[trackgray] (v2) edge (v3);\draw[trackwhite] (v2) edge (v3);
 \draw[arrow] (v1) edge[bend angle=90, bend left,min distance=1em] node {\nlok{}} (v2);
  \draw[arrow] (v1) edge[bend angle=-90, bend left,min distance=1em] node {\nlok{}} (v2);
}})$. By Fact \ref{fac:extend}, the rule set $\{\Move,\Delete\}$ is a repairing set for $\NoTwo$. 

\ignore{
there is a transformation $\scalebox{0.7}{\embedtikz{
\node[waypoint,label={[below, yshift=-0.2cm]\tiny }] at (1,-2) (v1) {};
\node[waypoint,label={[below, yshift=-0.2cm]\tiny }] at (2,-2) (v2) {};
\node[waypoint,label={[below, yshift=-0.2cm]\tiny }] at (3,-2) (v3) {};
\draw[trackgray] (v1) edge (v2);\draw[trackwhite] (v1) edge (v2);
\draw[trackgray] (v2) edge (v3);\draw[trackwhite] (v2) edge (v3);
 \draw[arrow] (v1) edge[bend angle=90, bend left,min distance=1em] node {\nlok{}} (v2);
 \draw[arrow] (v1) edge[bend angle=90, bend right,min distance=1em] node {\nlok{}} (v2);
}}\DSdder_{\Move2}
{\embedtikz{
\node[waypoint,label={[below, yshift=-0.2cm]\tiny }] at (1,-2) (v1) {};
\node[waypoint,label={[below, yshift=-0.2cm]\tiny }] at (2,-2) (v2) {};
\node[waypoint,label={[below, yshift=-0.2cm]\tiny }] at (3,-2) (v3) {};
\draw[trackgray] (v1) edge (v2);\draw[trackwhite] (v1) edge (v2);
\draw[trackgray] (v2) edge (v3);\draw[trackwhite] (v2) edge (v3);
\draw[arrow] (v2) edge[bend angle=90, bend left,min distance=1em] node {\nlok{}} (v3);
\draw[arrow] (v1) edge[bend angle=90, bend right,min distance=1em] node {\nlok{}} (v2);
}}$ 
via the $\{\Move,\Delete\}$-based program \\
\[\Move2:\quad\scalebox{0.7}{$\tuple{\embedtikz{
\node[waypoint,label={[below, yshift=-0.2cm]\tiny }] at (1,-2) (v1) {};
\node[waypoint,label={[below, yshift=-0.2cm]\tiny }] at (2,-2) (v2) {};
\node[waypoint,label={[below, yshift=-0.2cm]\tiny }] at (3,-2) (v3) {};
\draw[trackgray] (v1) edge (v2);\draw[trackwhite] (v1) edge (v2);
\draw[trackgray] (v2) edge (v3);\draw[trackwhite] (v2) edge (v3);
 \draw[arrow] (v1) edge[bend angle=90, bend left,min distance=1em] node {\nlok{}} (v2);
}\DSdder
\embedtikz{
\node[waypoint,label={[below, yshift=-0.2cm]\tiny }] at (1,-2) (v1) {};
\node[waypoint,label={[below, yshift=-0.2cm]\tiny }] at (2,-2) (v2) {};
\node[waypoint,label={[right, yshift=-0.2cm]\tiny }] at (3,-2) (v3) {};
\draw[trackgray] (v1) edge (v2);\draw[trackwhite] (v1) edge (v2);
\draw[trackgray] (v2) edge (v3);\draw[trackwhite] (v2) edge (v3);
\draw[arrow] (v2) edge[bend angle=90, bend left,min distance=1em] node {\nlok{}} (v3);
},
\PE\embedtikz{
\node[waypoint,label={[below, yshift=-0.2cm]\tiny}] at (1,-2) (v1) {};
\node[waypoint,label={[below, yshift=-0.2cm]\tiny}] at (2,-2) (v2) {};
\node[waypoint,label={[below, yshift=-0.2cm]\tiny}] at (3,-2) (v3) {};
\draw[trackgray] (v1) edge (v2);\draw[trackwhite] (v1) edge (v2);
                \draw[trackgray] (v2) edge (v3);\draw[trackwhite] (v2) edge (v3);
               \draw[arrow] (v1) edge[bend angle=90, bend left,min distance=1em] node {\nlok{}} (v2);
                \draw[arrow] (v1) edge[bend angle=90, bend right,min distance=1em] node {\nlok{}} (v2);
}
\;}$}\downarrow.\]
}
\end{example}

{{ \begin{example}[no $\{\Move,\Delete\}$-based repair]
Consider the constraint $\Station=\PE\scalebox{0.5}{$\rightstation$}$ (there exists a train station). Whenever the start graph has no station, then no station can be created by a $\{\Move,\Delete\}$-based program. The reason is that the labels of the constraint do not occur in the right-hand sides of the rules.
\end{example}
} 

In the case of $\R$ is $d$-compatible w.r.t. $P_d$, for all rules in the repair program, there are transformations via $\R$. These transformations via $\R$ can be transformed into $\R$-based programs.

\begin{theorem}[from transformations to rule-based programs]\label{thm:trafo} 
For every transformation $t\colon G\der_\R H$, there is a $\R$-based program $P(t)$ such that $t\equiv P(t)$. 
\end{theorem}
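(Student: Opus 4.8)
The plan is to proceed by induction on the length of the transformation $t\colon G\der_{\R} H$, building the $\R$-based program $P(t)$ step by step from the individual direct transformations and then concatenating with sequential composition. The base case is a transformation of length $0$, i.e.\ an isomorphism $G\cong H$; here $P(t)=\Skip$ works, since $\Skip$ is $\R$-based by definition and its semantics consists of identity triples, which is exactly what the equivalence of transformations demands. For the inductive step we write $t\colon G\dder_{\prule,g',h'} G'\der_{\R} H$ with $\prule\in\R$, obtain an $\R$-based program $P(t')$ with $t'\equiv P(t')$ for the tail $t'\colon G'\der_{\R}H$ by the induction hypothesis, and we must produce an $\R$-based program $P_1$ equivalent to the single step $G\dder_{\prule}G'$; then $P(t)=\tuple{P_1;P(t')}$ is $\R$-based and, by the semantics of sequential composition and the compositional nature of the equivalence relation on transformations, $t\equiv P(t)$.

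\textbf{The core of the argument} is therefore the length-one case: given a direct transformation $G\dder_{\prule,g',h'}G'$ via a plain rule $\prule\in\R$ at a match $g'\colon L\injto G$, construct an $\R$-based program $P_1$ whose semantics, restricted to the relevant interface, reproduces exactly the triples of this one transformation step together with all its extensions. The idea is to turn the concrete match $g'$ into an application condition: using the $\Shift$ construction (Lemma~\ref{lem:shift}) we can encode "the match is (isomorphic to) $g'$ inside this particular occurrence of $G$" — more precisely, we equip $\prule$ with a context that captures precisely the part of $G$ that is needed to pin down the match (the image $g'(L)$ together with its dangling context), and with an application condition, obtained via $\Shift$ along the context inclusion, that forces the rule to fire exactly at $g'$ and nowhere else. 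The dangling-edges operator is available in the repertoire of an $\R$-based program (Definition~\ref{def:rule-based}), so any dangling edges created by the deletion part of $\prule$ are handled, matching the actual effect of the transformation $t$ on $G$. One then checks that the two interface morphisms $x,y$ can be chosen (e.g.\ as the inclusions of $G\cap G'$, the preserved part) so that the interface relation of $P_1$ coincides with the track morphism of $t$, which is what the definition of equivalence of transformations ultimately compares via extensions $G^*\to H^*$.

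\textbf{The main obstacle} I expect is making the equivalence "for each extension of $t$ there is one of $P_1$, and vice versa" genuinely tight: it is not enough that $P_1$ can perform the same rewrite on $G$; it must perform it in every larger context $G^*$ in exactly the same way and in no other way. This is where the application condition built from $\Shift$ has to do real work — it must be strong enough to exclude spurious matches in arbitrary extensions $G^*$ yet weak enough to still admit the intended one. The needed bookkeeping is precisely the content of the construction behind "rules with context and application condition can simulate a fixed match," and the proof will lean on Lemma~\ref{lem:shift} to transfer satisfaction of the match-pinning condition along the context morphism $R\injto R'$, and on the fact that pushouts are stable under such extensions. A secondary, more routine, point is verifying that sequential composition of $\R$-based programs is again $\R$-based and that $\equiv$ on transformations is compatible with composition, so that the induction goes through cleanly; these follow directly from Definitions~\ref{def:rule-based} and~\ref{def:prog} and the definition of equivalence.
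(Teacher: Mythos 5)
Your overall route is the paper's: induction on the length of the derivation, sequential composition of the per-step programs, and a one-step program obtained by taking the applied rule of $\R$ equipped with context, a $\Shift$-derived application condition, interfaces and the dangling-edges operator, wrapped between a selection and an unselection; adding $\Skip$ for the zero-step case is fine and is left implicit in the paper.

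There is, however, a genuine problem in what you declare to be the core of the one-step case: you ask the application condition obtained via $\Shift$ to ``force the rule to fire exactly at $g'$ and nowhere else'' and to ``exclude spurious matches in arbitrary extensions $G^*$''. An application condition cannot do this. Satisfaction of a condition over the (context-extended) left-hand side is invariant under isomorphic relocation of the occurrence, so in an extension $G^*$ containing two isomorphic copies of the relevant neighbourhood no condition can single out the designated one; conditions constrain how an occurrence sits in $G^*$, they cannot name concrete items of $G^*$. In the paper the pinning is done by the interface, not by the condition: the one-step program is $\tuple{\select(g'\circ x,\ac');\bar{\prule}';\unselect(h'\circ y)}$, where the left interface $g'\circ x$ (together with the requirement $h_1=g_2$ in the semantics of sequential composition, which threads the position from step to step) fixes where the derived rule $\bar{\prule}=G\dder H$ is applied, while $\ac'=\Shift(g',\ac)$ serves a different purpose: it transports the original application condition $\ac$ of the rule in $\R$ to the enlarged left-hand side, which is needed for the ``vice versa'' direction of $t\equiv P(t)$, since $\ac$ (e.g.\ a negative condition) need not be reflected by embeddings into $G^*$. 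Your interface choice (the preserved part, so that the interface relation matches the track morphism) is pointed in the right direction, but as written it only does bookkeeping for the interface relation, while the match-pinning stays delegated to a condition that cannot achieve it. Two further, smaller divergences: the paper takes the context to be all of $G$, so $\bar{\prule}$ is the derived rule of the given direct transformation and no hand-crafted ``minimal dangling context'' is needed, and its interfaces are $g'\circ x$ and $h'\circ y$ rather than the whole preserved part. With the pinning moved from the condition to the interface, your induction goes through exactly as in the paper.
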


\begin{construction}Let $t:G\der_\R H$ be a transformation.
For direct transformations $t\colon G\dder_{\prule,g,h} H$ via a rule $\prule=\tuple{x,p,\ac,y}$ with interfaces $X$ and $Y$, let $P(t):=\tuple{\select(g'\circ x,\ac');\bar{\prule}';\unselect(h'\circ y)}$ be the rule with left interface $g'\circ x$, $\bar{\prule}=G\dder H$ be the rule $\prule$ equipped with context, $\ac'=\Shift(g',\ac)$ the left application condition for $\bar{\prule}$, and $h'\circ y$ the right interface. For transformations $t\colon G=G_0\dder_{\R}^{n+1} G_{n+1}=H$, with $t_1 \colon G_0\dder_\R^n G_n$, and $t_2\colon G_n \dder_\prule G_{n+1}$, $P(t):= \tuple{P(t_1); P(t_2)}$.
\end{construction}


\begin{proof}Let $t:G\der_\R H$ be a transformation. By construction, $P(t)$ is $\R$-based. We show that there is a transformation $G\dder_{P(t)} H$. For one-step transformations, by construction, $t\equiv P(t)$. For $n+1$-step transformations, by  induction hypothesis, $t_1\equiv P(t_1)$ and $t_2\equiv P(t_2)$. Then $P(t):=\tuple{P(t_1);P(t_2)}$ is a program with $t\equiv P(t)$.\end{proof}

\ignore{Given a condition $d$ over $A$, and repair transformation for $d$, we construct a repair program inductively as follows:
If the transformation step yields to the satisfaction of $d$, we equip the rule with an application condition $\ac'$, such that the rule  is applicable if there is a violation of $d$, and the transformation step is {\red $d$-guaranteeing}.
If the transformation step does not directly yield to the satisfaction , we construct the $\neg d$-preserving application condition.
By the left interface $\tr\circ x$, the graph $A$ is fixed for the next rule application. 
It says that at this position (and no other) the rule shall be applied.}

\begin{theorem}[rule-based repair]\label{thm:rb-repair} For every proper condition $d$, every repair program $P_d$ for $d$, and every rule set $\R$ $d$-compatible w.r.t. $P_d$, there is an $\R$-based repair program for $d$.\end{theorem}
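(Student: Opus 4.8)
The plan is to follow exactly the four-step recipe already sketched in Figure~\ref{fig:based}, using Theorem~\ref{thm:repair} to obtain $P_d$ and Theorem~\ref{thm:trafo} to convert the witnessing transformations into $\R$-based programs, and then to argue that the substitution preserves the repair property. First I would take the repair program $P_d$ for $d$ guaranteed by Theorem~\ref{thm:repair} (Construction~\ref{const:proper}), which is built inductively from the basic building blocks $\try\R_a$ and $\Rdown{\S'_a}$ for the repairing sets $\R_a$, $\S_a$ occurring in $d$, glued together with $\select$, $\unselect$, sequential composition, and the iteration operator. Since $\R$ is $d$-compatible with respect to $P_d$, by Definition~\ref{def:comp} every rule $\prule_i$ occurring in one of these repairing sets admits an equivalent transformation $t_i\colon G_i\der_\R H_i$ via $\R$; applying Theorem~\ref{thm:trafo} to each such $t_i$ yields an $\R$-based program $P(t_i)$ with $t_i\equiv P(t_i)$, hence $\prule_i\equiv P(t_i)$ as well (equivalence of transformations is transitive, and $P(t_i)$ is $\R$-based by construction).

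Next I would define the $\R$-based program $P_d^\R$ by replacing, inside $P_d$, every repairing rule $\prule_i$ by the corresponding $\R$-based program $P(t_i)$, leaving the program-combinator structure ($\try$, $\downarrow$, $;$, $\select$, $\unselect$, $\{\cdot,\cdot\}$) untouched. Observe that $\select(x,\ac)$ and $\unselect(x)$ are already $\R$-based in the degenerate sense admitted by Definition~\ref{def:rule-based} (they are identical rules equipped with interface and application condition — or, more carefully, one folds the $\select$/$\unselect$ wrappers produced by the Construction of Theorem~\ref{thm:trafo} into the surrounding $\select$/$\unselect$ of Construction~\ref{const:proper}), and $\Skip$ is $\R$-based by fiat. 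Hence $P_d^\R$ is $\R$-based. It remains to show $P_d^\R$ is still a repair program for $d$, i.e.\ that for all $G\dder_{P_d^\R} H$ we have $H\models d$ (and, for the relativized version, $h\circ i\models d$ for $\tuple{g,h,i}\in\psem{P_d^\R}$).

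For this I would use an induction on the structure of $d$ mirroring the proof of Theorem~\ref{thm:repair}, together with the key observation that replacing $\prule_i$ by an \emph{equivalent} transformation (in the strong sense of Definition: equivalent under all extensions) does not change the semantics of the surrounding program. Concretely: $\try\R_a$ is a repair program for $\PE a$ because some rule of $\R_a$ is applicable exactly when $\PE a$ fails and its application establishes $\PE a$; since each $\prule_i\equiv P(t_i)$, the program obtained by substituting $P(t_i)$ has the same triples in its semantics, so it is still a repair program for $\PE a$ — and symmetrically $\Rdown{\S_a'}$ stays a repair program for $\NE a$ because $\downarrow$ iterates to a normal form where the (equivalent) $\R$-based replacements are no longer applicable, i.e.\ where $\NE a$ holds. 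The inductive cases $\PE(a,c)$ and $\PA(a,c)$ go through verbatim as in the proof of Theorem~\ref{thm:repair}, since the $\select$/$\unselect$ scaffolding and the $A$-preserving property are unaffected by the substitution and the inductive hypothesis supplies an $\R$-based repair program $P_c^\R$ for $c$.

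\textbf{Main obstacle.} The delicate point is the interface/$\select$ bookkeeping in step~(3)--(4): the program $P(t_i)$ produced by Theorem~\ref{thm:trafo} carries its own $\select(g'\circ x,\ac')$ and $\unselect(h'\circ y)$ wrappers, and one must check that composing these with the $\select(a)$/$\unselect(a)$ already present in Construction~\ref{const:proper}, and with the context-extension and $\Shift$ of application conditions, yields a well-formed $A$-program whose semantics still matches that of the original repairing rule — in particular that the dangling-edges operator $(\cdot)'$ commutes appropriately with the passage to $\R$-based form, and that the application conditions $\ac\wedge\ac_B$ attached to the rules of $\R_a$ in Construction~\ref{const:basic}(2) are correctly transported (via $\Shift$) to the refined transformations so that the "applicable iff $d$ violated" discipline is preserved. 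Making the notion of equivalence of transformations do precisely this work — i.e.\ checking that $t\equiv t'$ implies interchangeability inside $\try(\cdot)$, $(\cdot)\downarrow$, and sequential composition — is the real content; everything else is the routine structural induction.
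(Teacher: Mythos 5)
Your proposal is correct and follows essentially the same route as the paper: obtain equivalent transformations via $\R$ from $d$-compatibility, convert them to $\R$-based programs by Theorem~\ref{thm:trafo}, substitute them for the repairing sets in $P_d$, and conclude because substitution of semantically equivalent subprograms preserves the semantics (the paper compresses this last step into an appeal to Leibniz's replacement principle, where you instead re-run the structural induction of Theorem~\ref{thm:repair} and explicitly flag the $\select$/$\unselect$ and interface bookkeeping that the paper leaves implicit). No gap beyond the paper's own level of detail.
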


\begin{construction}Let $P'_d=P_d[\repl]$ where $\repl$ is constructed as follows:
By assumption, for the rules in the repairing $A$-set, there are equivalent \ignore{$A$-}transformations via $\R$. For these transformations, there are equivalent $\R$-based programs with interfaces (Theorem \ref{thm:trafo}). The mapping $\repl$ replaces the repairing $A$-sets by equivalent $\R$-based programs with interfaces $A$. 
\end{construction}

\begin{proof} By assumption, for all \ignore{$A$-}rules in the repairing $A$-sets of $P_d$, there are equivalent \ignore{$A$-}transformations via $\R$. By Theorem \ref{thm:trafo}, the transformations can be transformed into equivalent $\R$-based repair programs. This yields a mapping $\repl$ which replaces the repairing $A$-sets by equivalent $\R$-based programs with interfaces $A$. 
By the Leibniz's replacement principle\ignore{Leibniz's law}, the repair programs $P_d$ and $P_d[\repl]$ are equivalent. Thus, $P_d[\repl]$ is an $\R$-based repair program for $d$.
\end{proof}

To get maximally preserving rule-based repair programs, we have to assume, that our input rule set is maximally preserving, as well. If a rule set is maximally preserving, then the number of deleted items is minimal. For non-deleting rule sets, the number of deleted elements is $0$, and the graph can be preserved. If a rule set is deleting, we delete edges instead of nodes, whenever possible, since it is more costly to delete nodes than edges.

\begin{fact}[program properties]
Rule-based repair programs based on Construction \ref{const:basic}(1), in general, are neither terminating nor maximally preserving. Rule-based repair programs based on Construction \ref{const:basic}(2) are terminating and maximally preserving.
\end{fact}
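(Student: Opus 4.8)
The plan is to show both halves by reduction to the analogous statements already established for (non-rule-based) repair programs, using the fact that the rule-based repair program $P_d[\repl]$ obtained in Theorem~\ref{thm:rb-repair} is equivalent to $P_d$, and that the replacement $\repl$ acts by substituting each repairing set with an equivalent $\R$-based program. The key observation is that \emph{termination} and \emph{maximal preservation} are both properties that are invariant (in the relevant sense) under the equivalence $\equiv$ on transformations used in Definition~\ref{def:comp}: equivalent transformations have the same source and target graphs, and the equivalent $\R$-based programs from Theorem~\ref{thm:trafo} realize exactly the same input/output pairs, hence the same track morphisms up to isomorphism, hence the same values of $\pres$ and $\size$.

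For the negative half, it suffices to exhibit a single counterexample, and the one already in the paper transfers directly: for the constraint $\PA(\onenode{1},\PE\twonodesedge{1}{}{})$, the repair program based on Construction~\ref{const:basic}(1) is $\tuple{\onenode{1}\dder\twonodesedge{1}{}{}}\downarrow$, which is already $\R$-based for $\R=\{\onenode{1}\dder\twonodesedge{1}{}{}\}$ (equipped with empty context and trivial application condition). As shown in the proof of the earlier Fact, this program is not terminating (it keeps adding fresh nodes and edges), and a rule of the form $C\dder A$ used inside such a program deletes all of $[C-A]$ even when only one deletion is necessary, so it is not maximally preserving; equipping the rule with context or an application condition via $\repl$ does not change this, since $\repl$ preserves the equivalence class of the underlying transformation. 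Hence rule-based repair programs built on Construction~\ref{const:basic}(1) are in general neither terminating nor maximally preserving.

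For the positive half, I would argue as follows. By Lemma~\ref{fac:mpres}, the repair program $P_d$ assembled from the Construction~\ref{const:basic}(2) repairing sets is terminating and maximally preserving. The rule-based repair program is $P_d[\repl]$, where $\repl$ replaces each repairing set $\R_a$ (resp.\ $\S_a$) by an $\R$-based program $P(t)$ with $t\equiv \prule$ for the corresponding rule $\prule$. Termination: each replacement step $P(t)=\tuple{\select(g'\circ x,\ac');\bar\prule';\unselect(h'\circ y)}$ (or a finite sequential composition of such, for multi-step $t$) realizes exactly the direct transformations that $\prule$ realizes, so the one-step derivation relation of $P_d[\repl]$ is, up to these bounded-length finite detours through select/unselect, the same relation as that of $P_d$; since a finite number of auxiliary identity-rule steps cannot create an infinite chain where none existed, $\to$ remains terminating. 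Maximal preservation: by the induction of Lemma~\ref{fac:mpres}, it is enough to check that each block $P(t)$ preserves exactly the items that the replaced rule $\prule$ preserves. This follows from $t\equiv\prule$: the track morphism of $P(t)$ from $G$ to $H$ has the same domain (up to isomorphism) as that of $\prule$, because $\select$ and $\unselect$ are identical rules and thus preserve everything, and $\bar\prule'$ is $\prule$ equipped only with context and a shifted application condition, which changes neither what is deleted nor what is added. Hence $\pres(P_d[\repl],t')=\pres(P_d,t)$ for corresponding transformations, and the inequality $\pres(P_d[\repl],t')\ge\size(G)-\Delta(g,d)$ is inherited verbatim from Lemma~\ref{fac:mpres}. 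Therefore $P_d[\repl]$ is terminating and maximally preserving.

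The main obstacle I anticipate is making the invariance of $\pres$ under $\equiv$ fully rigorous: one must verify that the semantics of the composed program $\tuple{\select(g'\circ x,\ac');\bar\prule';\unselect(h'\circ y)}$ induces a track morphism with precisely the same domain as the track morphism of $\prule$, and that sequential composition of such blocks composes track morphisms correctly (this is exactly the inductive clause in the definition of $\tr$). A secondary subtlety is that the equivalence in Definition~\ref{def:comp} is stated via extensions $G^*\to H^*$ rather than directly via track morphisms, so one needs the (routine but necessary) observation that equivalent transformations in this sense delete and add the same items, which is what makes $\Delta(g,d)$ and $\size(G)$ the same on both sides. Both points are structural bookkeeping rather than deep, but they are where the proof has to be careful.
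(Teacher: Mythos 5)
Your proposal is correct and follows essentially the same route as the paper, which simply observes that both statements carry over from the corresponding results for plain repair programs (the earlier Fact and Lemma~\ref{fac:mpres}) via the equivalence $P_d \equiv P_d[\repl]$ established in Theorem~\ref{thm:rb-repair}. Your elaboration of why termination and maximal preservation are invariant under this replacement (select/unselect and context/shifted conditions not affecting what is deleted or preserved) is a faithful, more detailed spelling-out of the paper's one-line argument.
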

\begin{proof}
The statements follow immediately form the corresponding statements in Fact \ref{fac:mpres}.
\end{proof}

\ignore{
{\input{Examples/4-chains-compatible}}
}

\section{Related concepts}\label{sec:related}

In this section, we present some related concepts on rule-based graph repair. For the related problem of model repair, there is a wide variety of different approaches. For a more sophisticated survey on different model repair techniques, and a feature-based classification of these approaches, see \textbf{Macedo et al.} \cite{Macedo-etal17a}.

{\bf Rule-based repair.} The notion rule-based repair is used in different meanings. In most cases \cite{Nebras-etal17a,Habel-Sandmann18a}, a rule set is derived from a set of constraints and a repair algorithm/program is constructed from the rule set. In this paper, a rule set and a condition are given as input and a repair program is constructed from the rule set.

In {\bf Nassar et al. 2017} \cite{Nebras-etal17a}, a rule-based approach to support the modeler in automatically trimming and completing EMF models is presented. For that, repair rules are automatically generated from multiplicity constraints imposed by a given meta-model.
\ignore{The approach to model repair consists of two activities: 
(1) The meta-model of a given language is translated to a rule-based model transformation system (repair rules). 
(2) A potentially invalid model is repaired yielding a valid model. The repair algorithm uses the generated transformation system.}
The rule schemes are carefully designed to consider the EMF model constraints defined in \cite{Biermann-Ermel-Taetzer12a}.

In {\bf Habel and Sandmann 2018} \cite{Habel-Sandmann18a}, given a {proper} condition, we derive a rule set from the condition $d$ and construct a repair program using this rule set. The repair program is required to repair all graphs.
In this paper, we use programs with interface \cite{Pennemann09a} with selection and unselection of parts, instead of markings as in \cite{Habel-Sandmann18a}. {\red For simple cases and illustration purposes, marking may be an alternative. For conditions with large nesting depth, the morphism-based concept is more convenient, the marking of elements requires an additional marking for each nesting depth.} 

In {\bf Schneider et al. 2019} \cite{Schneider-etal19a}, a logic-based incremental approach to graph repair is presented, generating a sound and complete (upon termination) overview of least changing repairs. The graph repair algorithm takes a graph and a first-order (FO) graph constraint as inputs and returns a set of graph repairs. Given a condition and a graph, they compute a set of symbolic models, which cover the semantics of a graph condition.

\begin{table}[htp]
\[\scalebox{1}{
\begin{tabular}{|l|c|c|c|c|} \hline
& Schneider et al. \cite{Schneider-etal19a} & Nassar et al. \cite{Nebras-etal17a} & Habel et al. \cite{Habel-Sandmann18a} & this work \\ \hline
input &  FO condition & {EMF model} & FO condition &  FO condition \\ 
& \& graph & & & \& rule set \\ \hline 
output & repair algorithm & repair rules & repair program & repair program \\ 
& & \& valid model & & \\ \hline
correctness & + & + & + & + \\ \hline
dynamic & + & - & - & -\\ \hline 
termination & - & (+) & + & {(+)}\\ \hline
\end{tabular}
}\]
\caption{\label{fig:table}Overview of selected repair approaches}
\end{table}

All approaches are proven to be \textbf{correct}, i.e. the repair (programs) yield to a graph satisfying the condition. 
In Schneider et al. \cite{Schneider-etal19a} the delta-based repair algorithm takes the graph update history explicitly into account, i.e. the approach is \textbf{dynamic}. In contrast, our approach is static, i.e., we first construct a ($\R$-based) repair program, then apply this program to an arbitrary graph. 
In Schneider et al. \cite{Schneider-etal19a}, the program does not \textbf{terminate}, if the repair updates trigger each other ad infinitum. If we choose the repairing set accordingly, we get a terminating repair program.

\ignore{In both approaches, the condition is transformed into a {\red \textbf{conjunctive normal form} \cite{Pennemann09a}. 
The difference is, that our normal form is then transformed into a condition with alternating quantifiers, in \cite{Schneider-etal19a}, additionally, unsatisfiable conditions are removed relying on an ``oracle'' which decides the satisfiability problem.}
W.l.o.g., both approaches assume, that each condition of the form $\PE (A \injto C)$ (or $\NE (A \injto C)$) has an inclusion morphism, such that $A$ is a proper subgraph of $C$.}

In {\bf Taentzer et al. 2017} \cite{Taentzer-etal17a}, a designer can specify a set of so-called change-preserving rules, and a set of edit rules.
Each edit rule, which yields an inconsistency, is then repaired by a set of repair rules.
The construction of the repair rules is based on the complement construction.
It is shown, that a consistent graph is obtained by the repair program, provided that each repair step is sequentially independent from each following edit step, and each edit step can be repaired. 
The repaired models are not necessarily as close as possible to the original model.


In {\bf Cheng et al. 2018} \cite{Cheng-etal18a}, a rule-based approach for graph repair is presented. Given a set of rules, and a graph, they use this set of rules, to handle different kinds of conditions, i.e., incompleteness, conflicts and redundancies. 
The rules are based on seven different operations not defined in the framework of the DPO-approach.
They look for the ``best'' repair based on the ``graph edit distance''.

\section{Conclusion}\label{sec:conclusion}
The repair programs are formed from rules derived from the given proper condition.
They were constructed to be maximally preserving, i.e. to preserve nodes as much as possible.
Additionally, we have considered rule-based repair where the repair programs are constructed from a given set of small rules and a given condition. 
Based on the repair program for proper conditions, we have constructed a rule-based repair program for proper conditions provided that the given rule set is compatible with the repairing sets of the original program. 

Summarizing, we have constructed
\begin{enumerate}
\item[(1)] repair programs for proper conditions (Theorem \ref{thm:repair}),
\item[(2)] rule-based programs from transformations (Theorem \ref{thm:trafo}),
\item[(3)] rule-based repair programs for proper conditions provided that the given rule set is compatible with the repairing sets of the original program (Theorem \ref{thm:rb-repair}).
\end{enumerate}

Further topics are rule-based repair programs for all satisfiable conditions, for typed attributed graphs and EMF-models, i.e., typed, attributed graphs satisfying some constraints, and an implementation.
\ignore{
\[
\scalebox{0.8}{
\tikz[node distance=5em,inner sep=3pt]{
\node(M)[label={above right:}] {\begin{tabular}{c}invalid  Model $M$\end{tabular}};
\node(G)[below of=M]{invalid Graph $G$};
\node(H)[right of=G,node distance=14em]{\begin{tabular}{c}valid  Graph $H$ \end{tabular}};
\node(M')[right of=M,node distance=14em] {\begin{tabular}{c}valid  Model $N$\end{tabular}};
\node(h)[right of=M,node distance=6em]{};
\node(MM)[above of=h,node distance=4em]{\begin{tabular}{c}Meta-model \& \\ constraint\end{tabular}};
\draw[-] (MM) to node[above,sloped,midway] {\scriptsize conforms} (M');
\draw[-] (MM) to node[above,sloped,midway] {\scriptsize } (M);
\draw[->,dashed] (M) to node[left] {\begin{tabular}{c} e.g. \\ \cite{Biermann-Ermel-Taetzer12a}\end{tabular}} (G);
\draw[->] (M) to node[above] {$\R$-based} node[below] {model repair} (M');
\draw[->] (G) to node[above] {$\R$-based} node[below] {graph repair} (H);
\draw[->,dashed] (H) to node[right] {} (M');
}}\]
}
The aim is to represent the structure of a meta model as graph-like structure, and OCL constraints as nested graph conditions, and then use the ($\R$-based) graph repair for ($\R$-based) model repair.


\paragraph{\normalfont\textbf{Acknowledgements.} We are grateful to Marius Hubatschek, Jens Kosiol, Nebras Nassar, and the anonymous reviewers for their helpful comments to this paper.}

\bibliographystyle{eptcs}
\bibliography{bib/lit-GRAGRA,bib/lit-LOGIK,bib/lit-model-repair,bib/lit-MCheck,bib/lit-ocl,bib/lit-PN}

\end{document}